\documentclass{llncs}

\usepackage{amsmath}
\usepackage{amssymb}
\usepackage{xspace}
\usepackage{bm}
\usepackage{graphicx}
\usepackage{multirow}
\usepackage{pdflscape}
\usepackage{array}
\usepackage{relsize}
\usepackage[noadjust]{cite}

\usepackage{xcolor}
\usepackage{algorithmicx}
\usepackage{algorithm}
\usepackage[noend]{algpseudocode}
\algrenewcommand\alglinenumber[1]{\color{gray}\sf\footnotesize#1}

\usepackage{thmtools}

\spnewtheorem{observation}{Observation}{\itshape}{\rmfamily}

\newcommand{\cL}{\mathcal{L}}

\newcommand{\Lup}{L^{\mathsf{rem}}}
\newcommand{\Ldown}{L^{\mathsf{add}}}

\newcommand{\addcontainer}[1]{\mathsf{add}(#1)}
\newcommand{\remcontainer}[1]{\mathsf{rem}(#1)}
\newcommand{\nummoves}[1]{\mathsf{\#moves}(#1)}

\newcommand{\dfn}{:=}

\newcommand{\ie}{i.e.}

\newcommand{\lb}{$\mathrm{LB}$\xspace}
\newcommand\Cpp{C\nolinebreak[4]\hspace{-.05em}\raisebox{.4ex}{\relsize{-3}{\textbf{++}}}\xspace}

\newcommand{\configurationstacking}{\textsc{Configuration Stacking}\xspace}
\newcommand{\prioritystacking}{\textsc{Priority Stacking}\xspace}
\newcommand{\exclusionscheduling}{\textsc{Mutual Exclusion Scheduling}\xspace}

\newcolumntype{L}[1]{>{\raggedright\let\newline\\\arraybackslash\hspace{0pt}}m{#1}}
\newcolumntype{C}[1]{>{\centering\let\newline\\\arraybackslash\hspace{0pt}}m{#1}}
\newcolumntype{R}[1]{>{\raggedleft\let\newline\\\arraybackslash\hspace{0pt}}m{#1}}

\title{A branch and price procedure for the container premarshalling problem\thanks{Part of this paper will appear in the proceedings of ESA 2014.}}
\author{Martijn van Brink \inst{1} and Ruben van der Zwaan \inst{2}} 
\institute{Maastricht University, email: \email{m.vanbrink@maastrichtuniversity.nl} \and Eindhoven University of Technology, email: \email{g.r.j.v.d.zwaan@tue.nl}}

\date{\today}

\pagestyle{plain}

\begin{document}
\maketitle

\begin{abstract}
During the loading phase of a vessel, only the containers that are on top of their stack are directly accessible. If the container that needs to be loaded next is not the top container, extra moves have to be performed, resulting in an increased loading time. One way to resolve this issue is via a procedure called premarshalling. The goal of premarshalling is to reshuffle the containers into a desired lay-out prior to the arrival of the vessel, in the minimum number of moves possible. This paper presents an exact algorithm based on branch and bound, that is evaluated on a large set of instances. The complexity of the premarshalling problem is also considered, and this paper shows that the problem at hand is NP-hard, even in the natural case of stacks with fixed height.
\end{abstract}

\section{Introduction}
Enormous volumes of goods are shipped yearly all over the world in standardized containers. These containers typically require multiple modes of transportation to reach their destination. At container terminals, containers are transshipped between ships, trucks, and trains. This transshipment generally does not occur immediately upon delivery of a container, therefore containers are temporarily stored in an area called the container yard. The container yard consists of a set of blocks, which in turn consist of a set of bays. Each bay contains a number of rows, called stacks, with a certain height.

One main indicator of the efficiency of a container terminal is the berthing time of a vessel, which consists primarily of the time needed to load and unload containers. During the unloading phase, information on pick-up time and destination of the containers is often inaccurate or even unknown. This makes it difficult to obtain an unloading sequence that permits an efficient loading sequence. Hence, during the loading phase it can occur that the container that needs to be retrieved next, is not on top of the stack. In this case, the containers on top of this container need to be rehandled, \ie, relocated within the container yard, before the desired container can be retrieved. These rehandle operations greatly increase the time needed to remove the container from the yard.

One way to resolve this issue is to reshuffle the containers prior to the arrival of the vessel. This operation is called \emph{remarshalling}, and the goal is to find a sequence of rehandles, also called moves, of minimum length that reorganizes the stacks such that no container that is needed early is below a container that is needed late. This results is no rehandles during the loading phase, thus reducing the berthing time. The only valid move is to pick up the top container of one stack and put it on top of another stack, see Figure~\ref{fig:examplestacking} for an illustration.

\begin{figure}
  \centering
  \includegraphics[width=0.6\textwidth]{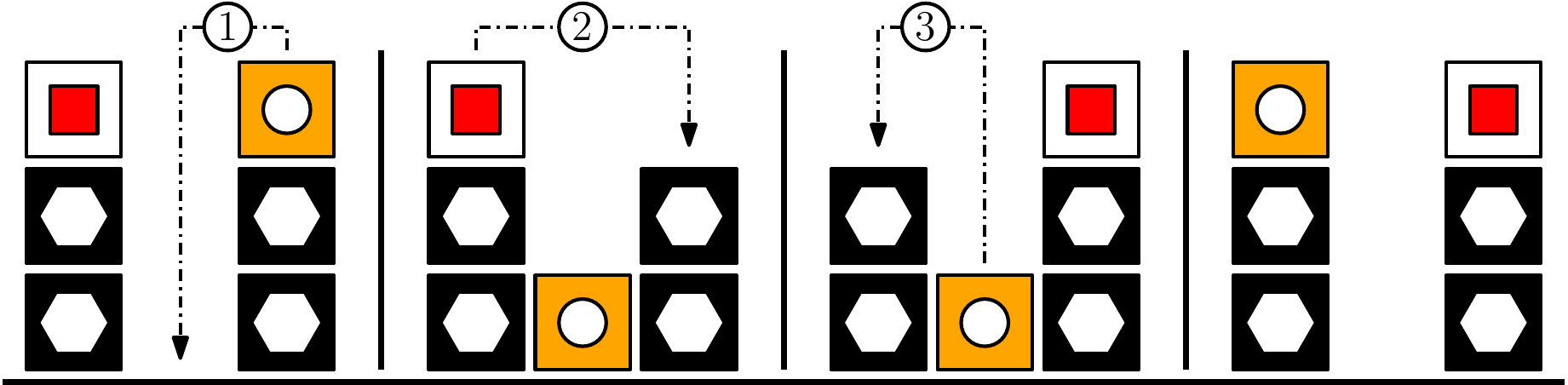}\\
  \caption{Small example of three moves transforming the stacks. }\label{fig:examplestacking}
\end{figure}

Two types of remarshalling operations can be identified, called \emph{intra-block remarshalling} (hereafter called remarshalling) and \emph{intra-bay premarshalling} (hereafter called premarshalling). The containers are reshuffled between bays for the former, and within a bay for the latter. The premarshalling variant is primarily applicable to container yards that use rail mounted gantry cranes. For safety reasons, it is not allowed to move the crane to a different bay while carrying a container. If a container needs to be moved to another bay, it is first placed on a truck, after which the truck moves to the target bay. There, a second crane picks up the container from the truck. This operation is extremely time consuming, and is thus avoided whenever possible \cite{LC2009}. Another main difference between remarshalling and premarshalling lies in the number of cranes that are used. For premarshalling typically only a single crane is used, while for remarshalling several cranes are often used simultaneously \cite{CSV2011}.

In this paper we focus on the premarshalling problem, and we follow the same assumptions as Bortfeldt and Forster \cite{BF2012}: (a) a single crane is used for rehandling containers, and (b) the time needed to move a container from one stack to another does not depend on the distance between the two stacks. This last assumption follows from the fact that the time needed to position the crane over a stack is negligible compared to the time needed to pick up or drop off a container. As a consequence, we are only interested in the number of rehandling operations.

We assume that for each container a priority level is given, and the goal is to transform the initial lay-out into a desired target lay-out in the minimum number of rehandles. The main problem studied is  called \prioritystacking: we accept all lay-outs in which no container with a lower priority is placed on top of a container with a higher priority. In a variant, called \configurationstacking, we restrict the target lay-out to a single pre-specified lay-out. The main motivation to also look at \configurationstacking is that giving a concrete target lay-out might give guidance to the algorithm and yield faster computation time. 

\paragraph{Related literature}
The operations at container terminals are well studied in the literature. Steenken, Vo{\ss}, and Stahlbock \cite{SVS2004} and Stahlbock and Vo{\ss} \cite{SV2008} describe the most important processes and operations at container terminals and give an overview of methods to optimize these operations. Vis and De Koster \cite{VK2003} give a classification of the different decision problems that occur at container terminals, and give an overview of relevant literature.

While there is a vast amount of work on the logistics of container terminals, the number of publications on the premarshalling problem is limited. We are only aware of one paper that provides an exact algorithm. Lee and Hsu \cite{LH2007} develop a mixed integer linear program based on a multicommodity network flow formulation that solves both \prioritystacking and \configurationstacking to optimality. However, this formulation can only be reasonably applied to very small instances, and the running time heavily depends on the choice of the number of time points. For larger instances the authors provide a heuristic that iteratively applies the exact approach on small parts of the instance. Integer multicommodity network flow is a generalization of edge-disjoint paths which cannot be approximated better than $\Omega(\sqrt{n})$, which immediately implies that the integrality gap of this formulation is at least that \cite{GKRSY2003}.

The remaining literature on the premarshalling problem is about the design of fast heuristics for \prioritystacking. Lee and Chao \cite{LC2009} describe a heuristic that minimizes the weighted sum of the mis-overlay index, which can be seen as a measure for the number of rehandles during the loading phase, and the number of rehandles during the premarshalling phase. Caserta and Vo\ss \cite{CV2009b} develop a heuristic based on the corridor method, where the basic idea is to use an exact method for limited portions of the entire solution space. Bortfeldt and Forster \cite{BF2012} describe a refined heuristic tree search procedure that looks at move sequences rather than individual moves. This heuristic is reported to be faster than the heuristic by Caserta and Vo\ss.

Caserta, Schwarze, and Vo{\ss} \cite{CSV2011} give an overview of recent developments on three so-called post-stacking problems. Besides the remarshalling and premarshalling problem, the authors also consider the (intra-bay) blocks relocation problem. In addition to the premarshalling problem, containers need to be removed from the bay in a certain order that minimizes the number of rehandles. It was proven that this problem is NP-hard, but with arbitrarily high stacks \cite{CSV2012}. This proof also works as a proof that \prioritystacking is NP-hard with arbitrarily high stacks. To the best of the authors knowledge there are no results about the natural case when stack heights are bounded by a constant. Typical stack heights are between $2$ and $8$ containers, while currently used equipment can handle a stack height of at most $10$ containers \cite{VK2003, SV2008}.

\paragraph{Our contributions}
We develop a fast exact algorithm based on column generation for the premarshalling problem and evaluate it extensively. To the best of our knowledge, we are the first to extensively experiment with an exact algorithm. Lee and Hsu \cite{LH2007} also design an exact algorithm, but only evaluate it on two instances. Our algorithm is evaluated on $960$ instances, with roughly $70\%$ of the instances solved within one second. We also see that our method exhibits a low integrality gap. Finally, we consider the complexity of \prioritystacking and \configurationstacking. Current NP-hardness proofs require a stack height that depends on the number of containers. We state an NP-completeness proof for both problems where the stack height is constant, which resembles the real-life situation.

\paragraph{Organization}
In Section \ref{sec:preliminaries} we introduce notation and formally describe the premarshalling problem. In Section \ref{sec:ilp} we describe an ILP formulation and an oracle for finding variables in a column generation approach. This is then used in Section \ref{sec:bnp} to design a branch and price algorithm, whose experimental performance is analyzed in Section \ref{sec:expRes}. In Section \ref{sec:nphard} we consider the complexity of both premarshalling variants. Finally, some conclusions are drawn in Section \ref{sec:conclusion}.

\section{Preliminaries}\label{sec:preliminaries}
Let $[n] := \{1,...,n\}$. For two intervals $[a,b]$ and $[c,d]$ we say that they \emph{overlap} if $[a,b] \cap [c,d] \neq \emptyset$ and neither interval contains the other interval. A set of intervals is {\em non-overlapping} if the intervals pairwise do not overlap.

The premarshalling problem is defined as follows. Given are $m$ stacks of maximum height $h$ and $n$ containers, each container labeled with a priority $\ell$ from $[k]$. In line with the definitions used in the literature, a lower priority number indicates a higher priority level, i.e., containers with priority $1$ are needed first, and containers with priority $k$ last. The \emph{lay-out} of a stack $i$ with $j \leq h$ containers is denoted as an ordered set of priorities $X_i := \{ x_1,\ldots,x_j \}$, where the first element is the priority of the bottom container and the last element is the priority of the top container. Notice that containers with the same priority are indistinguishable, therefore we will abbreviate ``move container with priority $\ell$'' to ``move container $\ell$''.

The goal is to transform the initial lay-out to a target lay-out by performing the minimum number of moves, while adhering to the maximum stack height. A move is defined as picking up the top-most container of one stack and placing it on top of another stack. For \prioritystacking the set of target lay-outs consists of all lay-outs such that all stacks are sorted in non-increasing order when viewed from the bottom, \ie, for a stack $i$ with $X_{i} := \{x_{1}, \ldots, x_{j}\}$, we have that $x_{p+1} \leq x_{p}$ for $p = 1,\ldots,j-1$. For \configurationstacking there is only one target lay-out, which is specified beforehand.

\section{Formulation as an ILP}\label{sec:ilp}
In this section we describe the linear program model that we use for both \prioritystacking and \configurationstacking.

Let us first introduce some notation. Let the tuple $(\ell,t)$ denote a move of container $\ell$ at time $t$, and consider stack $s$. Let $\Ldown_{s}$ and $\Lup_{s}$ contain moves $(\ell,t)$ such that container $\ell$ is respectively added to, or removed from, stack $s$ at time $t$, and let $L_{s} \dfn ( \Ldown_{s} , \Lup_{s})$. The set $L_{s}$ is \emph{feasible} if (a) at every time $t$ there is at most one move; (b) for all moves $(\ell,t) \in \Ldown_{s}$ and $(\ell,t) \in \Lup_{s}$ we have that at time $t$ stack $s$ contains at least one free spot, or container $\ell$ is the top container of stack $s$, respectively; (c) the lay-out obtained by executing the moves is a target lay-out. Hence, a feasible $L_{s}$ can be viewed as a sequence of moves that transforms stack $s$ into a target lay-out, where $\Ldown_{s}$ and $\Lup_{s}$ consist of the moves where a container is added or removed, respectively. Furthermore, let $\nummoves{L_{s}}$ denote the number of containers added in $L_{s}$, \ie, $\nummoves{L_{s}} \dfn |\Ldown_{s}|$. Finally, Let $\cL_{s}$ denote the set of sequences that transform stack $s$ into a target lay-out, \ie, $\cL_{s} \dfn \{L_{s} : L_{s} \text{ is feasible}\}$.

We consider the following ILP, where $x_{s,L} \in \{0,1\}$ for $L \in \cL_{s}$ has value $1$ if and only if stack $s$ is transformed according to sequence $L$. Let $\addcontainer{L,\ell,t}$ and $\remcontainer{L,\ell,t}$ be equal to $1$ if and only if $(\ell,t) \in \Ldown$ and $(\ell,t) \in \Lup$, respectively.

\begin{alignat}{3} \tag{$\mathsf{Integer~Linear~Program}$}\label{ILP}
\text{min} \quad & \sum_{s \in [m]} \sum_{L \in \cL_s} \nummoves{L} \ x_{s,L}  & \\
\text{s.t.} \quad & \sum_{s \in [m]} \sum_{L \in \cL_s} \left( \addcontainer{L,\ell,t}-\remcontainer{L,\ell,t}\right) x_{s,L}  \geq 0 & \quad \forall \ell \in [k], t \in [T] \tag{C1} \label{addRemove} \\
& \sum_{s \in [m]} \sum_{L \in \cL_s}\sum_{\ell \in [k]} \addcontainer{L,\ell,t} x_{s,L} \leq 1 & \forall t \in [T] \tag{C2} \label{oneMove} \\
& \sum_{L \in \cL_s} x_{s,L} = 1 & \forall s \in [m] \tag{C3} \label{oneSelected} \\
& x_{s,L} \in \{0,1\} &  \quad \forall s \in [m], L \in \cL_s \notag
\end{alignat}

The variables themselves already ensure that only sequences of moves are chosen that {\em for every stack individually} is feasible. The remaining constraints ensure that the local solutions together form a valid global solution.
Constraint~\eqref{addRemove} ensures that at time $t$ the number of containers of priority $\ell$ that are added is at least the number of containers of priority $\ell$ that are removed.
Constraint~\eqref{oneMove} enforces that at any time $t$ at most one container can be added. Note that this implies that at most one container is moved per time point.
Constraint~\eqref{oneSelected} make sure that exactly one sequence is selected for each stack.
By relaxing the requirement that the variables are either $0$ or $1$ we obtain the LP relaxation.

\paragraph{Solving the subproblem}
The problem of finding variables with negative reduced costs is almost equivalent to finding a maximum weight independent set in a circle graph, which can be solved in polynomial time by dynamic programming \cite{V2003,BB2012}.  A circle graph is an intersection graph of chords of a
a circle, two vertices/chords are adjacent if and only if they intersect. Circle graphs can be equivalently defined as the overlap graph of a set of intervals. The additional constraint that we impose is that there in the solution there are never more than $h$ intersecting intervals, corresponding to the height constraint. However, the algorithm to find the maximum weight independent set by dynamic programming can be easily adapted to take this constraint into account.

We will shortly describe how the problem of finding a sequence of moves with negative reduced costs for a fixed stack $s$ can be cast as finding a non-overlapping set of (labeled) intervals. For ease of exposition ignore the conditions on the initial and target lay-out of a stack and assume that all endpoints of the intervals are distinct. Let an interval $[a,b]$ with label $\ell$ mean that a container with priority $\ell$ is added to stack $s$ at time $a$ and removed at time $b$. Having two overlapping intervals $[a,b]$ and $[c,d]$ such that $a < c < b < d$ is interpreted as putting a container down at time $a$, putting another container on top at time $c$ and removing the first container at time $b$ {\em while the second container is still there}. Clearly this is infeasible. However, if $a < c < d < b$, then the second container would be put on top of the first, just as before, but {\em it would be removed before the first has to be removed}. Therefore, given a non-overlapping set of intervals whose endpoints are distinct, we have found a feasible sequence of moves.

\section{Branch and price algorithm} \label{sec:bnp}
The premarshalling problem is solved by iteratively running the algorithm with an increasing number of time points, starting from some lower bound, until the optimal solution is found. See Algorithm \ref{alg:bep} for an overview of the procedure.

\paragraph{Lower bound}
We say that a container is \emph{wrongly} placed if it is positioned on top of a container that (a) has a higher priority, or (b) is itself wrongly placed. Clearly, all wrongly placed containers need to be moved to obtain a target lay-out. However, if all stacks contain wrongly placed containers, then moving one does not reduce the number of wrongly placed containers. This number can only be reduced if at least one stack does not contain wrongly placed containers. The minimum number of moves required for this is equal to the lowest number of wrongly placed containers over all stacks. As lower bound we take the number of wrongly placed containers plus the minimum number of wrongly placed containers over all stacks.

\paragraph{Solving nodes}
For each node in a tree, $T$ time points are available to move containers. For solving a single node, we start with a model that contains (if any) previously generated sequences, and for each stack a dummy sequence with cost $T+1$, that performs no moves. These dummy sequences ensure that a feasible solution for the LP relaxation always exists. This initial model is solved, and as long as there are sequences with negative reduced cost, they are added and the model is resolved. Note that at each iteration at most one sequence is added per stack. If there are no more sequences with negative reduced cost, and the LP value exceeds $T$, we discard the node. If the LP value does not exceed $T$, we check if the solution is integral. If it is integral, we have found the optimal solution, and we stop the solve procedure. Otherwise, we apply the branching rule and continue with the next node. Every $100$ nodes the sequence pool is cleaned. All sequences that have not been used since the last cleanup, \ie, whose corresponding variable had value zero in the LP, are discarded.

\paragraph{Branching and node selection rule}
Observe that for a combination of stack and time point three actions are possible. Either (a) a container is added, (b) a container is removed, or (c) no move is performed. Let $\hat{t}$ be the minimum time point such that at least one of adding or removing a container is not forced for any stack at time $\hat{t}$. Out of all the stacks for which no action is forced at time $\hat{t}$, we take the stack $s$ for which the sum of all $x_{s,L}$ variables, such that for sequence $L$ a move is performed at time $\hat{t}$, is maximized, \ie, for which $\sum_{L \in \cL_s}\sum_{\ell \in [k]} (\addcontainer{L,\ell,\hat{t}}+\remcontainer{L,\ell,\hat{t}})x_{s,L}$ is maximized. In case of a tie, we take the stack $s$ that was considered first. By appropriately removing intervals, or adapting their value, this rule does not affect the difficulty of applying the separation oracle described in Section \ref{sec:ilp}. For exploring the tree we apply a depth first search. The node for which $\hat{t}$ is maximized, is considered next. In case of a tie, we take the most recently generated node.

\begin{algorithm}[h]
  \caption{Branch and Price algorithm\label{alg:bep}}
  \begin{algorithmic}[1]
    \State \textbf{procedure} \textsc{Premarshal}
      \State \quad set $T$ equal to the lower bound for the number of moves
      \State \quad \textbf{while} optimal solution not found \textbf{do}
        \State \quad\quad start with tree consisting of only a root node
        \State \quad\quad \textbf{while} exist unpruned leaf node \textbf{do}
          \State \quad\quad\quad $N \dfn$ leaf node deepest in tree, initialize LP model with dummy and ``valid'' 
          \State \quad\quad\quad sequences, solve LP model with column generation, update sequence pool
          \State \quad\quad\quad \textbf{if} $\text{LP value} > T$ \textbf{then} prune node $N$
          \State \quad\quad\quad \textbf{else} \textbf{if} solution integral \textbf{then} output opt. solution \& prune all nodes
          \State \quad\quad\quad \textbf{else} let $(s,t)$ denote the stack and time to branch on, add children
		  \State \quad\quad\quad\quad\quad\quad $N_{1}/N_{2}/N_{3} \dfn N \text{ with Add / Remove / Nothing fixed for } (s,t)$
          \State \quad\quad\quad every $100$ nodes clean sequence pool
        \State \quad\quad $T \dfn T+1$
  \end{algorithmic}
\end{algorithm}

\section{Experimental results}\label{sec:expRes}
In this section we evaluate our branch and price algorithm as described in Section \ref{sec:bnp}. We impose a time limit of one hour for each instance, and we only consider results for \prioritystacking. 

\subsection{Experimental setup}\label{sec:expRes_expSet}
The algorithm is implemented in \Cpp in combination with CPLEX 12.6, run on a machine with an Intel Core 2 Duo E8400 3.00 GHz processor and 4GB RAM, and evaluated on randomly generated instances. To the best of our knowledge no library with real-life instances for the premarshalling problem exists, and randomly generated instances are also used in for instance \cite{BF2012, CV2009b}. Furthermore, information that is required to determine the priority of a container is often not available or inaccurate at the time of unloading \cite{BF2012}, so that it is almost impossible to determine an appropriate position for the container.

The instances depend on four input parameters: the number of different priorities, the number of stacks, the height of the stacks, and the fill grade. For possible values for the number of priorities we consider \cite{LH2007}. To the best of our knowledge, this is the only other paper that considers an exact algorithm for the premarshalling problem. The authors basically consider two instances, with $3$ and $6$ priority levels, respectively. Therefore we consider at least $2$ (the minimum value possible) and at most $6$ different priorities. For the other parameters, Lee and Chao \cite{LC2009} observe that $12$ stacks with a height of $6$ is already larger than most equipment can handle, and a fill grade of $75\%$ is considered moderately high. A minimum height of $4$ is observed in general. As we apply an exact method, we also consider slightly lower parameter values, and thus smaller instances. For the number of stacks we take a maximum value of $9$, for the height of the stacks we take either $4$ or $6$, and for the fill grade we take either $50\%$ or $70\%$, which we consider a low and average fill grade, respectively.

For the exact values of these parameters, which are called \emph{Priorities}, \emph{Stacks}, \emph{Height}, and \emph{Fill}, respectively, see Table \ref{tab:parameterValues}. The number of containers is determined by multiplying the number of available positions, \ie, Stacks times Height, with Fill. In case this number is fractional, it is truncated. 

\begin{table}[h]
\centering
\begin{tabular}{C{1.8cm}|C{1.8cm}|C{1.8cm}|C{1.8cm}}
Priorities & Stacks & Height & Fill (\%) \\
\hline
2, 3, 6 & 3, 5, 7, 9 & 4, 6 & 50, 70 \\
\end{tabular}
\caption{Possible parameter values}\label{tab:parameterValues}
\end{table}

First, consider the case where Priorities has value $6$. Consider the containers one by one, and consecutively assign them priority $1$ to $6$. The initial lay-out is determined by randomly picking a container and placing it on a randomly selected non-full stack, until all containers are placed.

Second, consider the case where Priorities has value $2$ or $3$. In this case, the instances are based on the ones where Priorities has value $6$. Each stack has the same number of containers, but the priorities are updated. For Priorities equal to $2$, the three lowest and the three highest priorities are grouped together. For Priorities equal to $3$ the lowest two, middle two, and highest two priorities are grouped together.

If for any of the three values for Priorities the instance does not contain a wrongly placed container, \ie, no premarshalling operations are necessary, all three instances are discarded. This procedure is repeated until $20$ instances are generated for all $48$ combinations of parameter values, resulting in $960$ instances.

\subsection{Results}\label{sec:expRes_res}
Over all $960$ instances, the average mis-overlay is $38.3\%$. Clearly, the higher the value for Priorities, Height, and Fill, the higher the mis-overlay. For Stacks the value of mis-overlay is constant. This also follows from the way the instances are generated: the number of containers is (almost) linear in the number of stacks, and for each container a random stack is selected.

Out of all the instances, $945$ $(98.4\%)$ are solved within one hour, $895$ $(93.2\%)$ within one minute, and $680$ $(70.8\%)$ within one second. The $15$ instances that are not solved within one hour all have value $6$ for Priorities, value $6$ for Height, and value $70$ for Fill. The number of unsolved instances is $1$, $4$, $4$, and $6$ for $3$, $5$, $7$, and $9$ stacks, respectively.

The average running time is $93.65$ seconds, and is increasing in all four parameters. Especially for Priorities, Height, and Fill there is a big increase, which can be contributed to the unsolved instances, which have a running time of $3600$ seconds. Even if only the solved instances are considered, we still observe a substantial difference. The average time spend on solving the LP relaxation, generating columns, and clearing the cplex model is $46.66$ ($49.8\%$), $34.51$ ($36.8\%$), and $11.67$ seconds ($12.5\%$), respectively. This leaves $0.81$ seconds ($0.9\%$) of overhead. The percentage of time spend on solving the LP relaxation is increasing in Priorities, Height, and Fill. If we compare the solved and unsolved instances, we observe little difference in the relative amount spend on solving the LP relaxation ($47.5\%$ versus $51.4\%$), generating columns ($40.0\%$ versus $34.7\%$), and clearing the cplex model ($11.5\%$ versus $13.1\%$). 

For the solved instances the integrality gap is on average $1.13$. The integrality gap is obtained by comparing the optimal (integer) solution with the value of the LP relaxation at the root node of the tree that contains the optimal solution. This LP value is the lowest one obtained, as adding forced moves and time points respectively increases and decreases this value. The integrality gap is increasing in Priorities, Height, and Fill, decreasing in Stacks, and the maximum integrality gap is $2.80$. For the $15$ unsolved instances we cannot determine the integrality gap, but we can give a lower bound. For these instances the average and maximum lower bound on the integrality gap is $1.18$ and $1.85$, respectively. Although these values are biased, there does not appear to be a big difference in integrality gap between the solved and unsolved instances.

Consider the solved instances. On average $2.41$ trees are solved for these instances. Out of these trees, on average $0.80$ are \emph{killed} immediately. This occurs if the LP value of the root node exceeds the number of time points. As only one node is solved for killed trees, solving these trees generally requires little time. Hence, on average $1.61$ trees are \emph{actually} solved per instance. On average, each container is moved $0.49$ times. Just as for the integrality gap, the number of actually solved trees and the number of moves per container is increasing in Priorities, Height, and Fill, and decreasing in Stacks. The reason the number of moves per container is decreasing in Stacks is that with more stacks there are more options for temporarily storing containers, so that wrongly placed containers are handled less often.

If we consider the instances that cannot be solved, we observe the same behavior for the number of solved trees, and moves per container. With respect to the values, we see an increase. The average number of moves per container increases from $0.49$ to $0.83$. This increase is logical, as the unsolved instances exhibit a larger mis-overlay ($63.4\%$, compared to $38.3\%$ over all instances) so that most likely more moves are necessary to obtain a target lay-out. For Stacks equal to $3$ the number of moves is $1.58$, however, there is only a single instance for Stacks equal to $3$ that cannot be solved, so this value may not be representative. For the number of actually solved trees we also observe a substantial increase, from $1.61$ to $2.94$.

Over all instances, the maximum number of solved trees, killed trees, and actually solved trees is $11$, $4$, and $8$, respectively. Finally, we consider the memory usage. Therefore, we look at the nodes and generated sequences.

The average number of solved nodes is $295.7$. On average $0.32$ seconds are needed to solve one node. This average time is higher for the unsolved instances compared to the solved instances ($0.23$ versus $0.43$ seconds). The time needed per node is increasing in Priorities, Height, and Fill. Somewhat surprising is the fact that the longest time is obtained for Stacks equal to $5$ ($0.44$ seconds, compared to $0.26$, $0.30$, and $0.30$ seconds for $3$, $7$, and $9$ stacks, respectively). The average and maximum number of nodes in memory is $4.8$ and $56$, respectively. The maximum number of nodes in memory is obtained by a solved instance. For the unsolved instances, the maximum is $39$. Hence, the memory usage with respect to the number of nodes is low and stable over time, \ie, longer running times do not lead to a huge increase in the number of nodes in the memory. The average number of solved nodes and the average and maximum number of nodes in memory are all increasing in all four parameters.

The average and maximum number of generated sequences is $14,396$ and $1,491,758$, respectively. However, since every $100$ nodes unused sequences are discarded, maximally $59,729$ sequences are stored in the memory. In this case, the maximum is obtained by an unsolved instance, but the maximum over the solved instances is $56,554$. Hence, there again does not appear to be a huge increase in memory usage with increasing running time. A surprising result is that the most sequences are generated (and also stored in memory) for Stacks equal to $3$. In fact, if we consider the generated (stored) sequences per stack, we observe a decrease with respect to the number of stacks. Not all sequences that are stored in the memory are added to the LP model because of branching decisions. The maximum number of variables added to the LP model is $39,303$. Because unused sequences are discarded, the number of sequences stored in the memory and used in the LP model are kept at acceptable levels, at the expense of possibly generating the same sequence several times.

See Tables \ref{results_overview_1} and \ref{results_overview_2} for an overview. For both tables the rows represent the number of instances, the mis-overlay in percentage, the integrality gap, the number of trees solved, killed, and actually solved, the run time in seconds, the number of nodes solved and in memory, and the number of generated sequences.

\begin{table}[h]
\centering
\begin{tabular}{L{3.0cm}|C{2.7cm}|C{2.7cm}|C{2.7cm}}
statistic & all instances & solved instances & unsolved instances \\
\hline
\# instances & 960 & 945 & 15 \\
mis-overlay (\%) & 38.3 & 37.9 & 63.4 \\
int. gap & - & 1.13 & - \\
trees solved & 2.45 & 2.41 & 5.07 \\
trees killed & 0.82 & 0.80 & 2.13 \\
actually solved & 1.63 & 1.61 & 2.94 \\
run time (sec.) & 93.65 & 37.98 & 3600.84 \\
nodes solved & 295.7 & 168.6 & 8301.2 \\
nodes memory & 4.78 & 4.61 & 15.50 \\
sequences generated & 14,396 & 8,067 & 413,156 \\
\end{tabular}
\caption{Results split into categories all, solved, and unsolved instances.}
\label{results_overview_1}
\end{table}

\begin{table}[h]
\centering
\begin{tabular}{L{3.0cm}|C{2.0cm}|C{2.0cm}|C{2.0cm}|C{2.0cm}}
statistic & $<$ 1 sec & 1 sec - 1 min & 1 min - 1 h & $>$ 1 h \\
\hline
\# instances & 680 & 215 & 50 & 15 \\
mis-overlay (\%) & 32.3 & 50.7 & 58.9 & 63.4 \\
int. gap & 1.09 & 1.19 & 1.32 & - \\
trees solved & 1.90 & 3.40 & 5.12 & 5.07 \\
trees killed & 0.54 & 1.41 & 1.72 & 2.13 \\
actually solved & 1.36 & 1.99 & 3.40 & 2.94 \\
run time (sec.) & 0.23 & 8.42 & 678.39 & 3600.84 \\
nodes solved & 8.4 & 117.0 & 2569.2 & 8301.2 \\
nodes memory & 2.83 & 8.67 & 11.30 & 15.50 \\
sequences generated & 271 & 4,240 & 130,542 & 413,156 \\
\end{tabular}
\caption{Results split into categories running time below 1 second, between 1 second and 1 minute, between 1 minute and 1 hour, and more than 1 hour.}
\label{results_overview_2}
\end{table}

\section{Complexity: NP-hard for constant height stacks}\label{sec:nphard}
Because of limited space, we only state our results that both stacking problems are NP-hard for constant height stacks. We would like to point out
that the proof for \configurationstacking is significantly more involved and is not implied by a proof for \prioritystacking.

\begin{theorem}
For every fixed $h \geq 6$, \prioritystacking and \configurationstacking are NP-hard.
\end{theorem}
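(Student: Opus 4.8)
The plan is to reduce from a known NP-hard problem that has a natural "layered" or "interval" structure that matches the geometry of stacks of bounded height. Since the paper already observes (in Section~\ref{sec:ilp}) that a single stack's move sequence corresponds to a non-overlapping set of labeled intervals with at most $h$ simultaneously active intervals, the height bound $h$ is exactly a bound on the number of nested/overlapping intervals, i.e.\ a bandwidth- or pathwidth-like parameter. A good candidate source problem is therefore something like \textsc{3-Partition} or a restricted \textsc{Bin Packing}/\textsc{Scheduling} variant (for instance \exclusionscheduling, which the paper has given a macro for, or a coloring problem on interval-like graphs), where items of given sizes must be packed into groups of bounded total size. The idea is that a gadget stack of height $h$ encodes a "bin" or a "machine slot," and the containers that must pass through it encode items; forcing a premarshalling solution of a prescribed length then forces a valid packing.

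The key steps, in order, would be: (1) fix the target source problem and its strong-NP-hardness (so that the reduction can be made with numbers polynomially bounded, which matters since moves are counted in unary-ish fashion); (2) design the container instance: a block of "skeleton" stacks that are already sorted and must not be touched except in tightly controlled ways, plus "garbage"/"buffer" stacks whose bounded height $h\ge 6$ limits how many containers can be parked simultaneously; (3) assign priorities so that the only way to reach \emph{a} sorted target layout (for \prioritystacking) within the move budget $M$ is to route containers through the buffers in an order that realizes a solution of the source instance; (4) prove the forward direction — a YES-instance of the source problem yields a premarshalling sequence of length $\le M$ — by exhibiting the explicit move schedule; (5) prove the converse — any sequence of length $\le M$ must, because every move is "paid for," avoid all wasteful detours, and the height-$6$ bottleneck forces the grouping that decodes to a source solution. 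Finally, (6) lift the construction to \configurationstacking: here one additionally pins down a single target layout, and the remark in the excerpt ("significantly more involved") suggests one cannot simply reuse the \prioritystacking gadget; I would add "labeling" stacks that make the sorted target essentially unique, or re-run the argument with a bespoke target configuration, checking that fixing the target does not accidentally give the solver a shortcut.

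The main obstacle I expect is the converse (tightness) direction: showing that the move budget $M$ is tight enough that \emph{no} clever reshuffling beats the intended schedule. Because containers of equal priority are indistinguishable and a single extra buffer move is cheap, the gadget must be engineered so that the lower-bound argument from Section~\ref{sec:bnp} (wrongly-placed containers plus the minimum over stacks) is met with equality only by packings that respect the source constraints — this typically requires an amortized/accounting argument assigning each move to a container it "finalizes," plus a careful case analysis of what a height-$6$ stack can hold at any instant. A secondary difficulty is keeping $h$ at the stated constant $6$ rather than some larger absolute constant: the number $6$ presumably arises from the width of one elementary gadget (e.g.\ encoding three size-classes of \textsc{3-Partition} items needs a stack tall enough to hold one of each plus separators), so I would first prove hardness for some large constant $h$ and then compress the gadget, or choose the source problem (e.g.\ a restriction with three item types) precisely so that height $6$ suffices.
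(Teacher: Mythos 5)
Your plan correctly identifies the overall shape of the argument (a packing-type source problem, a tight move budget \lb met only when every container is moved the minimum number of times, a forward and a converse direction, and a harder construction for \configurationstacking), and one of the candidates you list --- \exclusionscheduling --- is in fact the paper's source problem. But there are two concrete gaps. First, the candidate you actually lean on, \textsc{3-Partition}, would not work at constant height: its feasibility condition is numeric (item sizes summing to a bound $B$), and a stack of constant height can hold only a constant number of containers, so sizes would have to be encoded in unary as runs of containers, forcing the height to grow with the input. The problem that fits is \exclusionscheduling on \emph{permutation graphs} with group size at most $m$: there the ``bin capacity'' is a count of items, which maps directly onto stack height $h=m$, and the constant $6$ is not ``the width of one elementary gadget'' as you guess --- it is inherited verbatim from Jansen's theorem that this problem is already NP-hard for every fixed $m\ge 6$.

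Second, and more importantly, the plan is missing the central gadget that makes the converse direction work: the equivalence between \emph{independence in the permutation graph} and \emph{sortedness of a stack}. The paper introduces a node container of priority $i$ for each vertex $i$ and arranges the initial lay-out so that any solution with exactly \lb moves is forced to move the node containers in the order $\pi(n),\ldots,\pi(1)$ --- in the arbitrary-height warm-up by piling them all on one stack, and at fixed height by interleaving dummy containers that would otherwise block a later node container from ever being cleared. A receiving stack then ends up sorted in non-increasing priority if and only if the vertices whose containers were placed there form an independent set, which is exactly how a \exclusionscheduling solution is decoded; the same lemma, read in the other direction, gives the forward direction. Without committing to a source problem whose constraint is order-theoretic rather than numeric, and without this forced-order/sortedness correspondence, the accounting argument you sketch for the converse has nothing from which to decode a packing. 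Your instinct that \configurationstacking needs extra machinery to pin down the target is sound; the paper achieves this with several chained families of auxiliary stacks whose initial and target lay-outs force a unique move order in three phases, which is why that case is considerably longer.
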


\section{Conclusion}
\label{sec:conclusion}
We considered the intra-bay premarshalling problem. The objective is to transform the initial lay-out into a target lay-out in the minimum number of moves possible. We showed that the premarshalling problem is NP-hard, even for a fixed stack height of at least six. We developed an exact algorithm based on branch and price, that is evaluated on $960$ randomly generated instances. For \prioritystacking, $945$ are solved within one hour, $895$ within one minute, and $680$ with one second. Preliminary experiments show that the algorithm runs much faster for \configurationstacking, in the full version there will be a comparison.
An interesting topic for future research concerns the solution approach. Currently, either the optimal solution is found, or no solution is found at all. By for instance incrementing the number of time points $T$ (see Section \ref{sec:bnp}) with more than one, it might be possible to look for other feasible (near-optimal) solutions. 

\bibliographystyle{plain}
\bibliography{stacking_2}

\newpage
\appendix

\section{Finding columns with negative reduced costs}
For clarity we repeat the definition of the integer linear program as described in Section \ref{sec:ilp}. Let the tuple $(\ell,t)$ denote a move of container $\ell$ at time $t$, and consider stack $s$. Let $\Ldown_{s}$ and $\Lup_{s}$ contain moves $(\ell,t)$ such that container $\ell$ is respectively added to, or removed from, stack $s$ at time $t$, and let $L_{s} \dfn ( \Ldown_{s} , \Lup_{s})$. The set $L_{s}$ is \emph{feasible} if (a) at every time $t$ there is at most one move; (b) for all moves $(\ell,t) \in \Ldown_{s}$ and $(\ell,t) \in \Lup_{s}$ we have that at time $t$ stack $s$ contains at least one free spot, or container $\ell$ is the top container of stack $s$, respectively; (c) the lay-out obtained by executing the moves is a target lay-out. Hence, a feasible $L_{s}$ can be viewed as a sequence of moves that transforms stack $s$ into a target lay-out, where $\Ldown_{s}$ and $\Lup_{s}$ consist of the moves where a container is added or removed, respectively. Furthermore, let $\nummoves{L_{s}}$ denote the number of containers added in $L_{s}$, \ie, $\nummoves{L_{s}} \dfn |\Ldown_{s}|$. Finally, Let $\cL_{s}$ denote the set of sequences that transform stack $s$ into a target lay-out, \ie, $\cL_{s} \dfn \{L_{s} : L_{s} \text{ is feasible}\}$.

We consider the following ILP, where $x_{s,L} \in \{0,1\}$ for $L \in \cL_{s}$ has value $1$ if and only if stack $s$ is transformed according to sequence $L$. Let $\addcontainer{L,\ell,t}$ and $\remcontainer{L,\ell,t}$ be equal to $1$ if and only if $(\ell,t) \in \Ldown$ and $(\ell,t) \in \Lup$, respectively.

\begin{alignat}{3} \tag{$\mathsf{Integer~Linear~Program}$}
\text{min} \quad & \sum_{s \in [m]} \sum_{L \in \cL_s} \nummoves{L} \ x_{s,L}  & \\
\text{s.t.} \quad & \sum_{s \in [m]} \sum_{L \in \cL_s} \left( \addcontainer{L,\ell,t}-\remcontainer{L,\ell,t}\right) x_{s,L}  \geq 0 & \quad \forall \ell \in [k], t \in [T] \tag{C1}  \\
& \sum_{s \in [m]} \sum_{L \in \cL_s}\sum_{\ell \in [k]} \addcontainer{L,\ell,t} x_{s,L} \leq 1 & \forall t \in [T] \tag{C2}  \\
& \sum_{L \in \cL_s} x_{s,L} = 1 & \forall s \in [m] \tag{C3}  \\
& x_{s,L} \in \{0,1\} &  \quad \forall s \in [m], L \in \cL_s \notag
\end{alignat}

By relaxing the requirement that the variables are either $0$ or $1$ we obtain the relaxation $\mathsf{LP~Primal}$.

\begin{alignat}{3} \tag{$\mathsf{LP~Primal}$}\label{LP_primal}
\text{min} \quad & \sum_{s \in [m]} \sum_{L \in \cL_s} \nummoves{L} \ x_{s,L}  & \\
\text{s.t.} \quad & \text{Constraints } \eqref{addRemove},\eqref{oneMove}, \text{ and } \eqref{oneSelected} & \notag \\
& x_{s,L} \geq 0 & \quad \forall s \in [m], L \in \cL_s \notag
\end{alignat}

Let $\alpha_{\ell,t}$, $\beta_{t}$, and $\gamma_{s}$ be the dual variables for constraints of type \eqref{addRemove}, \eqref{oneMove} and \eqref{oneSelected}, respectively. The dual LP is as follows.

\begin{alignat}{3} \tag{$\mathsf{LP~Dual}$}\label{LP_dual}
\text{max} \quad & \sum_{t \in [T]} -\beta_{t} + \sum_{s \in [m]} \gamma_{s} & \\
\text{s.t.} \quad & \sum_{\ell \in [k]}\sum_{t \in [T]} \left(\addcontainer{L,\ell,t} - \remcontainer{L,\ell,t}\right) \alpha_{\ell,t} - \notag \\
& \quad\quad \sum_{\ell \in [k]}\sum_{t \in [T]} \addcontainer{L,\ell,t} \beta_{t} + \gamma_{s} \leq \nummoves{L} & \quad \forall s \in [m], L \in \cL_s \tag{D1} \label{dualConstraint} \\
& \alpha_{\ell,t}  \geq 0 & \quad \forall \ell \in [k], t \in [T] \notag \\
& \beta_{t}  \geq 0 & \quad \forall t \in [T] \notag
\end{alignat}

For column generation we need an algorithm that finds variables with negative reduced costs. Given the dual variables $\alpha_{\ell,t}$, $\beta_{t}$, and $\gamma_{s}$, the goal is to find (if any) $L \in \cL_{s}$ with negative reduced costs, \ie, for a specific stack $s$ the goal is to maximize
\[
\sum_{\ell \in [k]}\sum_{t \in [T]} \left(\addcontainer{L,\ell,t}\alpha_{\ell,t} - \remcontainer{L,\ell,t}\alpha_{\ell,t} - \addcontainer{L,\ell,t} \beta_{t}\right) + \gamma_{s} -  \nummoves{L} 
\]

Consider an arbitrary stack $s$. Since $\gamma_{s}$ is a constant for stack $s$, we disregard it for now. Without loss of generality assume that the initial lay-out is of height $h$ and let $X \dfn \{x_1,...,x_h\}$ denote this initial lay-out. We will phrase the goal of finding the maximum weight sequence for this stack as finding the maximum weight set of intervals $I$ such that
\begin{enumerate}
 \item[(a)] no two intervals start or end at the same time;
 \item[(b)] all intervals are non-overlapping;
 \item[(c)] there is no point $t$ such that more than $h$ intervals from $I$ intersect with $t$, \ie, the height is at most $h$.
\end{enumerate}

Without condition (a) this would be equivalent to finding a maximum weight independent set in the overlap graph or circle graph defined by the intervals. In fact, by shifting the intervals slightly we can cast the question as such but at the cost of a tedious description of the instance. Therefore, we include condition (a) which will simplify the construction. A set of intervals that adheres to these conditions is called \emph{feasible}. The start and end point of an interval coincide with placing the container on the stack, and removing the container from the stack, respectively.

We consider time points $-h+1,\ldots,T+k$, where time points $-h+1,\ldots,0$ are used to represent the initial lay-out, and time point $T+1,\ldots,T+k$ to represent the final lay-out. Time points $1,\ldots,T$ are the regular time points, where one container can be moved per time point. Hence, for intervals whose start point is in the range $-h+1,\ldots,0$, the corresponding container is not moved to the stack but is rather part of the initial lay-out. Similarly, for intervals whose end point is in the range $T+1,\ldots,T+k$, the corresponding container is not removed from the stack but is rather part of the final lay-out. Using these time points, we define the following intervals $I$ and weights $w$, where $B$ is equal to the sum of all $\alpha_{\ell,t}$ and $\beta_{t}$ values.

\begin{itemize}
  \item {\em Containers with label $\ell$ that are initially at height $i$, and which are not moved:}\\
        $[i{-}h, T{+}\ell; \ell]$ with weight $w[i{-}h, T{+}\ell; \ell] = 0 + 2B$
  \item {\em Containers with label $\ell$ that are initially at height $i$, but are removed:}\\
        $[i{-}h, t; \ell]$ for all $1 \leq t \leq T$ with weight $w[i{-}h, t; \ell] = -\alpha_{\ell,t} + 2B$;
  \item {\em Containers with label $\ell$ that are placed on the stack and never removed:} \\
        $[s, T{+}\ell; \ell]$ for all $1 \leq s \leq T$ with weight $w[s, T{+}\ell; \ell] = \alpha_{\ell,s}-\beta_{s}-1$;
  \item {\em Containers with label $\ell$ that are placed and later removed from the stack:}\\
        $[s,t; \ell]$ for all $1 \leq s < t \leq T$ with weight $w[s,t;\ell] = \alpha_{\ell,s}-\alpha_{\ell,t}-\beta_{s}-1$.
\end{itemize}

Intervals with an added weight of $2B$ are called \emph{heavy} intervals. For an illustration of these intervals, see Figure \ref{fig:interval}. 

\begin{figure}
\centering
\includegraphics[width=0.9\textwidth]{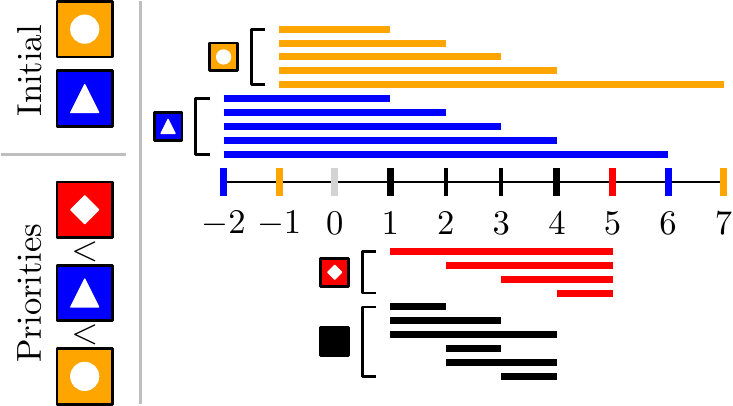}\\
\caption{This figure shows on the left a small instance consisting of three types of containers, called ``red'', ``blue'', and ``orange'', which can also be identified by the diamond, triangle, or circle, respectively. The priorities for these containers are $1$, $2$, and $3$, respectively. On the right the corresponding intervals for $T=4$ moves and maximum height $h=3$ are depicted. Above the time line the intervals specific for orange and blue are shown. Below the time line the intervals specific for red and the intervals pertaining to all three types are depicted. Note that the intervals above the line are exactly the heavy intervals.}
\label{fig:interval}
\end{figure}

\begin{lemma}
The maximum weight feasible set of intervals represents a maximum weight sequence.
\end{lemma}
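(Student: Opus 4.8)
The plan is to establish a weight-preserving bijection --- up to a fixed additive constant --- between feasible move sequences for stack $s$ and a distinguished family of feasible interval sets, and then to argue that an optimal feasible interval set must lie in that family. Concretely, I would map a feasible $L\in\cL_s$ to the interval set $I(L)$ that records, for every container ever on stack $s$: the instant it appears (the time $i-h$ if it is the initial container at height $i$, and its addition time otherwise), and the instant it leaves (its removal time if it is removed, and a time in the block $\{T+1,\dots,T+k\}$ encoding its label --- smaller labels getting earlier times --- if it stays until the end). The claim to prove is that $I(L)$ is feasible, has exactly $h$ heavy intervals (one per initial container), and has weight $w(I(L))=\mathrm{obj}(L)+2Bh$, where $\mathrm{obj}(L):=\sum_{\ell,t}(\addcontainer{L,\ell,t}-\remcontainer{L,\ell,t})\alpha_{\ell,t}-\sum_{\ell,t}\addcontainer{L,\ell,t}\beta_{t}-\nummoves{L}$ is exactly the quantity the oracle maximizes; and, conversely, that every feasible interval set with exactly $h$ heavy intervals equals $I(L)$ for a unique feasible $L$. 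Since $2Bh$ does not depend on the sequence, once we know an optimal interval set uses $h$ heavy intervals its preimage is a sequence of minimum reduced cost, which is the assertion.

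I would verify the correspondence by reading the interval conditions (a)--(c) as stack invariants; this is routine and I would only sketch it. By (a) the endpoints are all distinct, so at most one interval starts at each of the times $-h+1,\dots,0$, and the construction forces an interval starting at $i-h$ to carry the label $x_i$; hence a set with $h$ heavy intervals induces precisely the initial lay-out $X$ just after time $0$. Condition (b), non-overlap, is the LIFO law: two intervals sharing an instant must be nested, so the interval ending at a given time is the innermost active one, i.e.\ the current top container, and nesting forces the finishing times to decrease with height, so the heavy intervals genuinely stack as $x_1,\dots,x_h$. Condition (c) bounds the number of simultaneously active intervals by $h$ --- the stack never overflows, equivalently every added container finds a free spot. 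The containers still present after time $T$ are exactly those ending in $\{T+1,\dots,T+k\}$, ordered by label through nesting, so the final lay-out is sorted, i.e.\ a target lay-out. The weight identity is bookkeeping: an added container of label $\ell$ at time $s$ contributes $\alpha_{\ell,s}$ for its addition, $-\beta_{s}$ (it is the unique move at time $s$), $-1$ to $\nummoves{L}$, and $-\alpha_{\ell,t}$ if removed at $t$ --- exactly the weight of its non-heavy interval; a removed initial container contributes $-\alpha_{\ell,t}$ against an interval weight $-\alpha_{\ell,t}+2B$; an un-moved initial container contributes $0$ against weight $2B$; summing over the $h$ initial containers and the added ones gives the extra $2Bh$.

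The one delicate step --- and what I expect to be the main obstacle --- is to show that an optimal feasible interval set uses all $h$ heavy slots, which is exactly why the offset $B=\sum_{\ell,t}\alpha_{\ell,t}+\sum_{t}\beta_{t}$ is taken so large. Every heavy interval has weight at least $2B-\alpha_{\ell,t}\ge 2B-B=B\ge 0$. On the other hand, by (a) at most one interval starts at each regular time $1,\dots,T$, and a non-heavy interval starting at $s$ with label $\ell$ has weight at most $\alpha_{\ell,s}-1<\alpha_{\ell,s}$, so in any feasible set the non-heavy intervals have total weight strictly below $\sum_{\ell,t}\alpha_{\ell,t}\le B$ (and equal to $0$ if there are none). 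Assuming $\cL_s\neq\emptyset$, the sequence that removes exactly the wrongly placed containers of stack $s$ --- feasible because $T$ is at least the total number of wrongly placed containers, hence at least the number in stack $s$ --- maps to a feasible interval set with $h$ heavy intervals, no non-heavy interval, and weight $2Bh$ minus a partial sum of the $\alpha$'s, hence at least $2Bh-B$; whereas any feasible set with at most $h-1$ heavy intervals has weight at most $2B(h-1)+B=2Bh-B$, and strictly less unless $B=0$. Therefore a maximum-weight feasible interval set has exactly $h$ heavy intervals, so it is $I(L)$ for some feasible $L$, and by the weight identity $L$ maximizes $\mathrm{obj}$, i.e.\ minimizes the reduced cost. (The case $B=0$ is immediate: then $\mathrm{obj}(L)=-\nummoves{L}$, and the sequence that removes only the wrongly placed containers, together with its all-heavy interval set of weight $0$, attains the common optimum.) The crux, to reiterate, is this weight comparison: one must check that $2B$ strictly dominates every possible aggregate contribution of the non-heavy intervals.
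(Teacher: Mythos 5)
Your proposal is correct and follows essentially the same route as the paper's proof: translate moves to intervals, use nesting to encode the LIFO/sortedness constraints for the initial and final lay-outs, and use the $2B$ offset to force all $h$ heavy intervals into any optimum. The paper's version is terser — it simply observes that without the offset any feasible set has weight in $[-B,B]$, so including a heavy interval "always increases" the objective, and from that infers that exactly $h$ heavy intervals are selected — whereas you spell out the weight-preserving bijection $L\mapsto I(L)$ and the threshold comparison $2Bh-B$ explicitly, and you flag the degenerate $B=0$ case and the need for $\cL_s\neq\emptyset$ so that the all-heavy reference set actually exists; these are minor refinements of the same argument rather than a different proof.
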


\begin{proof}
Consider containers $x_{i}$ and $x_{i+1}$ in the initial lay-out. All intervals pertaining to $x_{i}$ and $x_{i+1}$ start at time point $i-h$ and $i-h+1$, respectively. Hence, the start point for $x_{i}$ is to the ``left'' of the start point for $x_{i+1}$. By the previously mentioned conditions, we must have that the end point of the interval for $x_{i}$ is to the ``right'' of the end point for $x_{i+1}$. This implies that $x_{i}$ can only be removed after $x_{i+1}$, which is indeed correct, as $x_{i}$ is placed below $x_{i+1}$. 

Disregarding the added weight of $2B$ for the intervals pertaining to the initial solution, we have that any feasible set of intervals has a weight ranging between $-B$ and $B$. Hence, adding a heavy interval at the expense of other intervals always increases the weight of the interval. Since all heavy intervals start in the range $-h+1,\ldots,0$ and at most one selected interval may start per time point, at most $h$ heavy intervals can be selected. Combining these observations, exactly $h$ heavy intervals will be selected, so that for every container in the initial lay-out a corresponding interval is selected.

Likewise, containers with higher priority are placed above containers with lower priority in the final lay-out. To see this, consider two containers $\ell$ and $\ell+1$, which have end point $T+\ell$ and $T+\ell+1$, respectively. Again by the conditions imposed on the selected intervals, we must have that the start point for $\ell+1$ is to the ``left'' of the start point for $\ell$. This implies that $\ell+1$ is placed first, so that $\ell$ is placed above $\ell+1$, which corresponds to a desired final lay-out. 
\end{proof}

Now we can set up a dynamic program to find the maximum weight feasible set of intervals. Let $I[a,b]$ contain the subset of intervals that is a subset of the range $[a,b]$, \ie, $I[a,b] \dfn \{[c,d;\ell] \in I:[c,d] \subseteq [a,b]\}$. For any interval $[a,b]$ and remaining height $j$ we consider all ``leftmost'' intervals $[a,c]$ for $a < c \leq b$. A leftmost interval either covers no container, which implies that no containers are added to or removed from the stack in this interval, or it covers a container $\ell$, which implies that other containers can be added to or removed from the stack in the interval $[a+1,c-1]$. Clearly, containers can always be added to or removed from the stack in the remaining rightmost interval.

Consider states $A[a,b,j]$ for $a,b \in \{-h+1,\ldots,T+k\}$ and $j \in \{0,\ldots,h\}$, which represents the optimal solution restricted to interval $[a,b]$ and remaining height $j$. By definition, we have that $A[a,b,j] \dfn 0$ if $b \leq a$ or $j = 0$ or $|\{ i \in I : i \subseteq [a,b]\}| = 0$. For all other states we have the following recursion

\[
A[a,b,j] \dfn \max_{a < c \leq b}
\begin{cases}
A[c,b,j] \\
\displaystyle\max_{i=[a,c;\ell] \in I[a,c]} \left\{w(i) + A[a{+}1,c{-}1,j{-}1] + A[c{+}1,b,j]\right\}
\end{cases}
\]

The running time is clearly polynomial, since evaluating a single entry in the dynamic program takes at most $O((T{+}h{+}k)k)$ time and there are $(T{+}h{+}k)^{2}(h{+}1)$ entries.

The approach for finding violated constraints for \configurationstacking is basically identical to the one for \prioritystacking. The only difference is that we now want to obtain a specific target lay-out. This is achieved in the same manner as that the initial lay-out is enforced. Hence, in stead of time points $T+1,\ldots,T+k$ for the final lay-out, we now use time points $T+1,\ldots,T+h$. Let $Y \dfn \{y_1,...,y_h\}$ denote the target lay-out. Intervals corresponding to container $y_{i}$ have $T+h-i+1$ as end point. Using these time points, we now have the following intervals $I$ and weights.

\begin{itemize}
  \item {\em Containers with label $\ell$ that are at height $i$ in the initial and target lay-out, and which are not moved:}\\
        $[i{-}h, T{+}h{-}i{+}1; \ell]$ with weight $w[i{-}h, T{+}h{-}i{+}1; \ell] = 0 + 4B$
  \item {\em Containers with label $\ell$ that are initially at height $i$, but are removed:}\\
        $[i{-}h, t; \ell]$ for all $1 \leq t \leq T$ with weight $w[i{-}h, t; \ell] = -\alpha_{\ell,t} + 2B$;
  \item {\em Containers with label $\ell$ that are placed on the stack and never removed, and are at height $i$ in the target lay-out:} \\
        $[s, T{+}h{-}i{+}1; \ell]$ for all $1 \leq s \leq T$ with weight $w[s, T{+}h{-}i{+}1; \ell] = \alpha_{\ell,s}-\beta_{s}-1 +2B$;
  \item {\em Containers with label $\ell$ that are placed and later removed from the stack:}\\
        $[s,t; \ell]$ for all $1 \leq s < t \leq T$ with weight $w[s,t;\ell] = \alpha_{\ell,s}-\alpha_{\ell,t}-\beta_{s}-1$.
\end{itemize}

\section{Complexity} \label{sec:complexity}
In this section, we review the complexity status of \prioritystacking and \configurationstacking. All reductions are from the Mutual Exclusion Scheduling problem on permutation graphs, see Jansen \cite{Jansen03}. Given a permutation $\pi:[n] \rightarrow [n]$, the permutation graph $G=([n],E)$ has a vertex for each number in $[n]$ and there is an edge $(i,j)\in E$ for all $i < j$ if and only if $\pi(i) > \pi(j)$. Permutation graphs are exactly the class of graphs that are both \emph{comparability graphs} and \emph{co-comparability graphs}. Permutation graphs can also be viewed as an \emph{intersection graph} of a set of line segments that connect two parallel lines.

For \exclusionscheduling we are given a permutation graph $G=([n],E)$, and integers $k \geq 1$ and $m \geq 2$. The goal is to partition the vertex set into disjoint sets $P_1,\ldots,P_k$ with $|P_i| \leq m$ and $P_i$ an independent set in $G$, for all $i \in [k]$. Notice that $P_i$ is independent if and only if for all $u,v \in P_i$ such that $u < v$ we have that $\pi(u) < \pi(v)$. Jansen proved that even for fixed $m \geq 6$ the problem is NP-hard for permutation graphs.

\begin{theorem}[\cite{Jansen03}]
For each fixed $m \geq 6$, \exclusionscheduling is NP-hard for permutation graphs.
\end{theorem}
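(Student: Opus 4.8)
The plan is to build a polynomial-time reduction from \textsc{3-Partition}, which is NP-hard in the strong sense. Recall that a \textsc{3-Partition} instance is a multiset $a_1,\dots,a_{3q}$ of positive integers with each $a_i\in(B/4,B/2)$ and $\sum_i a_i=qB$, and it is a yes-instance iff the indices split into $q$ triples, each summing to exactly $B$ (the size restriction forces each part to have exactly three elements). Strong NP-hardness means $B$ is polynomially bounded, which is exactly what lets the capacity parameter stay fixed: the numerical content of the instance will be pushed into the \emph{number} of classes $k$ and into the \emph{lengths} of certain runs (each built on $\Theta(a_i)$ vertices, hence polynomially many overall), and never into $m$. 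Membership of \exclusionscheduling in NP is immediate, since a partition into $k$ independent sets of size at most $m$ is itself a polynomial certificate.

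First I would pass to the convenient equivalent view of a permutation graph as the comparability graph of a two-dimensional poset $P$ (equivalently, the overlap pattern of line segments between two parallel lines): an independent set of $G$ is exactly a chain of $P$, so \exclusionscheduling asks whether $P$ can be covered by $k$ chains each of length at most $m$. I would then place two families of gadgets inside one global segment arrangement: (i) for each $a_i$, a \emph{number gadget} — an increasing run of $\Theta(a_i)$ elements that is forced to be routed, in bounded-size pieces, into the chains of a single ``bin''; and (ii) $q$ \emph{bin gadgets}, each contributing a block of forced ``separator'' elements that, together with whatever number material is routed to it, must assemble into chains of length exactly $m$ — which is possible iff the number gadgets assigned to that bin have total magnitude exactly $B$. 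Coordinates are chosen so that every number gadget is compatible (non-crossing) with every bin, but (a) distinct number gadgets cross so that a chain cannot pick up more than its designated share and a number gadget cannot be split between two bins, and (b) bin gadgets pairwise cross so that no chain straddles two bins. Taking $k$ to be the total number of chains used when everything packs perfectly and $m=6$ (with $m-6$ extra ``padding'' elements forced into every chain to handle general fixed $m\ge 6$), a cover by $k$ chains of length $\le m$ exists iff the $a_i$ admit a valid $3$-partition; the forward accounting is that in each bin the count identity $6\cdot(\#\text{chains})=(\#\text{separators})+(\#\text{routed units})$ pins the routed units to exactly $B$, and since $k$ is fixed, a bin can be neither over- nor under-supplied.

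The converse direction (turning a $3$-partition into a chain cover) is routine bookkeeping. The hard part will be the forward direction together with \emph{geometric realizability}: one must choose the segment endpoints so that the intended crossing pattern is genuinely realizable by chords between two parallel lines — i.e.\ the constructed poset really has dimension two — and so that the \emph{length} bound $m$, not some looser global count, is what forces the packing. Concretely, one has to rule out ``mixed'' chains of length $\le m$ that splice together material from the wrong bins or wrong numbers and could be used profitably, so that every optimal cover decomposes cleanly into per-bin families of chains. Arranging both of these at once while keeping the separator counts tight enough that $m=6$ (rather than some larger constant) already suffices is the delicate point; everything else reduces to verifying the two easy implications.
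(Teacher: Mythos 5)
First, note that the paper does not prove this statement at all: it is imported verbatim from Jansen \cite{Jansen03} and used as the starting point for the reductions in Section~\ref{sec:complexity}. So there is no in-paper proof to compare against, and what you have written is being judged as a standalone argument.

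As such, it is a plan rather than a proof, and the plan has a concrete gap at exactly the point on which the whole reduction turns: nothing in your construction forces a number gadget to be routed to a \emph{single} bin. You encode $a_i$ as an increasing run of $\Theta(a_i)$ mutually non-adjacent elements; since $m=6$ is constant, this run is necessarily chopped into $\Theta(a_i)$ different classes. Your stated crossing conditions --- distinct number gadgets pairwise cross, distinct bin gadgets pairwise cross, every number gadget is compatible with every bin --- only guarantee that each \emph{class} touches at most one number gadget and at most one bin gadget. They do nothing to prevent the $\Theta(a_i)$ classes covering number gadget $i$ from pairing with \emph{different} bins, i.e.\ the item $a_i$ can be split arbitrarily across bins. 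With splittable items the per-bin counting identity you invoke no longer encodes \textsc{3-Partition}; it encodes a fractional packing problem that is solvable in polynomial time, so the ``only if'' direction of your reduction fails. You flag ``ruling out mixed chains'' as the delicate point, but the delicate point is not mixed chains (your crossings already exclude those); it is the atomicity of the number gadgets, and no mechanism for it is proposed. Any correct reduction here must either enforce atomicity with additional gadgetry that links all pieces of one item to one bin (while remaining realizable as a permutation diagram --- a nontrivial constraint, since such linking tends to create crossings you do not want), or abandon the ``unary run per $a_i$'' encoding altogether. Until that is supplied, the argument does not establish NP-hardness.
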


We show that both \prioritystacking and \configurationstacking are NP-hard for fixed stack height. Both proofs are based on the same principle, which we will explain and elucidate by first proving that \prioritystacking is NP-hard for \emph{arbitrary} stack height. In these proofs, we will use three types of containers, called \emph{node}, \emph{dummy}, and \emph{fill} containers. For each node $i \in [n]$ of the permutation graph, there will be a node container $i$, which also has priority level $i$. All other containers will be either dummy or fill containers, where dummy containers have to be moved at least once, \ie, it is wrongly placed initially, while fill containers do not necessarily have to be moved. In figures, the containers are depicted by square boxes. Node and dummy containers can be identified by black or rounded corners, respectively.

\subsection{\prioritystacking is NP-hard for arbitrary height}
\label{sec:np_hard_priority_arbitrary}
We prove that \prioritystacking is NP-hard when the maximum stack height is not fixed. We construct an instance of \prioritystacking as follows. The maximum height $h$ is equal to $n+1$, and there are $k+1$ stacks. The initial lay-out for stacks $1,\ldots,k$ consist of $n-m+1$ fill containers with priority $\infty$. For stack $k+1$ the initial lay-out is $\{s, \pi(1), \ldots, \pi(n)\}$, where the priority of fill container $s$ is equal to $0$. See Figure~\ref{fig:3_npc_priority_arbitrary} for an illustration, where for each container the name is given and the fill containers with priority $\infty$ are depicted by black boxes.

\begin{figure}[h]
\centering
\includegraphics[scale=0.8]{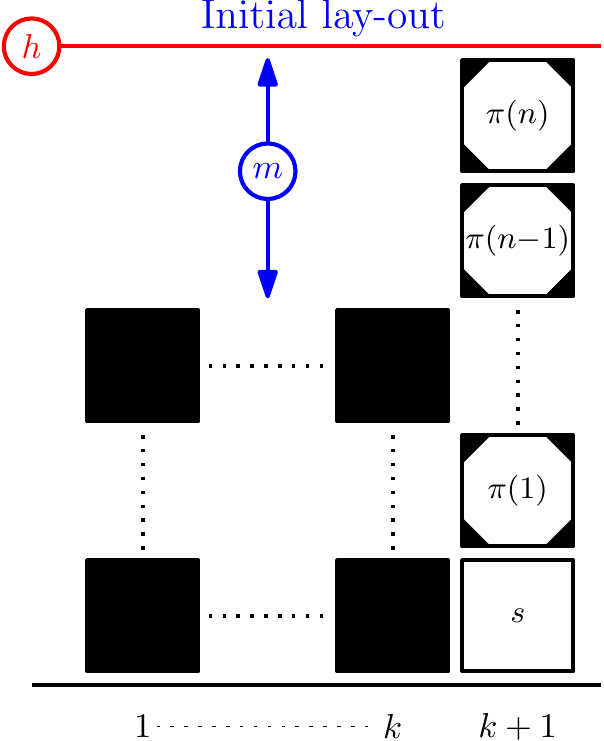}
\caption{\prioritystacking arbitrary height}
\label{fig:3_npc_priority_arbitrary}
\end{figure}

\begin{theorem}\label{thm:3_npc_priority_arbitrary}
\prioritystacking is NP-hard.
\end{theorem}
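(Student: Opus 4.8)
The plan is to give a polynomial-time reduction from \exclusionscheduling on permutation graphs (Theorem~\ref{Jansen03}) to \prioritystacking, using the instance sketched in Figure~\ref{fig:3_npc_priority_arbitrary}. Given a permutation $\pi$ and bound $m$, I build the instance exactly as described: maximum height $h=n+1$, stacks $1,\ldots,k$ each holding $n-m+1$ fill containers of priority $\infty$, and stack $k+1$ holding $\{s,\pi(1),\ldots,\pi(n)\}$ from bottom to top, where $s$ has priority $0$. The key structural observation is that in stack $k+1$ every node container $\pi(j)$ sitting above $\pi(i)$ with $\pi(i)<\pi(j)$ (equivalently $i<j$, $\pi(i)>\pi(j)$ swapped around — more precisely every inversion) is wrongly placed, and the fill container $s$ of priority $0$ at the bottom forces essentially all node containers above it to be wrongly placed unless they are moved; so to reach a sorted configuration every node container must be removed from stack $k+1$ at least once. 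Meanwhile each other stack has room for exactly $m$ more containers before hitting height $h=n+1$.

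The core of the argument is the two directions of the equivalence. For the forward direction, suppose $\pi$ admits a partition into $k$ independent sets $P_1,\ldots,P_k$ of size at most $m$. I move the node containers off stack $k+1$ in decreasing order of height (i.e.\ $\pi(n)$ first, then $\pi(n-1)$, etc.), sending the containers of class $P_r$ onto stack $r$. Because $P_r$ is independent — meaning its node labels appear in the same relative order in $\pi$ as their numerical order — the containers of $P_r$ arrive on stack $r$ in an order that leaves stack $r$ sorted (each newly placed container has priority no larger than the one below it among the $P_r$-containers, and all sit above the priority-$\infty$ fill containers). Since $|P_r|\le m$, no stack overflows. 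After all $n$ node containers are moved, stack $k+1$ contains only $s$, which is trivially sorted, and every other stack is sorted; this uses exactly $n$ moves. Conversely, if there is a solution with at most $n$ moves, then since each of the $n$ node containers must move at least once, each moves exactly once and nothing else moves; in particular no container is ever placed back on stack $k+1$ and the node containers are partitioned among stacks $1,\ldots,k$ according to where they land. The height constraint forces each such part to have size $\le m$, and the requirement that each stack ends sorted forces each part to be an independent set in $G$ — if two node containers $u<v$ with $\pi(u)>\pi(v)$ landed on the same stack, one would end up wrongly placed on the other, contradicting sortedness. Hence a partition into $k$ independent sets of size $\le m$ exists.

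The main obstacle, and the step deserving the most care, is pinning down the ``at most $n$ moves $\Rightarrow$ exactly one move per node container and no other moves'' claim rigorously: one must argue that the fill container $s$ of priority $0$ together with the structure of stack $k+1$ genuinely forces every node container to be relocated, that moving a container onto stack $k+1$ (or moving a fill container) can only waste moves and cannot help, and that the order in which node containers are peeled off is forced (only the current top of stack $k+1$ is accessible). This is a counting-plus-case-analysis argument: with a strict budget of $n$ moves and $n$ mandatory relocations, every move must be ``productive,'' which rules out touching fill containers, re-stacking onto $k+1$, and moving any node container twice. Once that rigidity is established, the correspondence between landing-stack assignments and independent-set partitions is immediate from the sortedness and height constraints. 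I would also note explicitly that the construction is polynomial in $n$ and $k$, and that it only uses $\le n+3$ distinct priority levels, so the reduction is valid; the fixed-height refinement (Section~\ref{sec:nphard}) is then obtained by a separate, more intricate gadget that simulates the tall stacks with constant-height ones.
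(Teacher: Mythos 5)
Your proposal is correct and follows essentially the same route as the paper: the identical construction (height $n+1$, $k$ stacks with $n-m+1$ fill containers, stack $k+1$ holding $s$ of priority $0$ below $\pi(1),\ldots,\pi(n)$), the same lower bound of $n$ forced moves, and the same two-directional correspondence in which the move budget forces each node container to move exactly once in top-down order and the landing stacks encode the partition, with sortedness equivalent to independence in the permutation graph. The rigidity step you flag as delicate is handled in the paper by exactly the counting argument you sketch, so nothing is missing.
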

\begin{proof}
Observe that all node containers are placed on top of fill container $s$, which has a higher priority than all node containers. Hence, all node containers need to be moved at least once to obtain a target lay-out. Hence, the lower bound \lb on the number of moves for this instance is $n$. We show that a solution for \exclusionscheduling exists if and only if a solution of \lb moves exists for \prioritystacking.

Assume that there is a solution for \prioritystacking of \lb moves. Hence, all node containers are moved exactly once. Since they are initially placed on the same stack, there is only in order in which they can be moved, namely $\pi(n),\ldots,\pi(1)$. Furthermore, since the node containers are initially on stack $k+1$, they are all moved to stacks $1,\ldots,k$. Each of these stacks contain $n-m-1$ fill containers, so that at most $m$ node containers are moved to each of these stacks. Hence, the two following properties hold by definition.

\begin{property}
\label{prop_order}
The node containers are moved in the order $\pi(n),\ldots,\pi(1)$. 
\end{property}

\begin{property}
\label{prop_layout}
All node containers are on stacks $1,\ldots,k$ after they are moved. Each of these stacks contains at most $m$ node containers.
\end{property}

Using these two properties, we state the following lemma.

\begin{lemma}
\label{lem_mes}
After all node containers are moved, the lay-out of stacks $1,\ldots,k$ constitutes a solution for \exclusionscheduling
\end{lemma}

\begin{proof}
For $j \in [k]$, let $P_{j}$ contain the nodes $i$ such that node container $\pi(i)$ is placed on stack $j$. By Property \ref{prop_layout} we know that $P_{1},\ldots,P_{k}$ form a partition, and that $|P_{j}| \leq m$ for all $j$. Hence, we only need to show that each $P_{j}$ is an independent set in the permutation graph. 

Clearly, $P_{j}$ is an independent set if it consists of at most one element. Take an arbitrary $j$ such that $|P_{j}| \geq 2$, and let $\pi(a)$ and $\pi(b)$, with $a < b$, be two arbitrary node containers on stack $j$. Hence, we have that $\{a,b\} \subseteq P_{j}$. As $a < b$, we have that container $\pi(b)$ was moved to stack $j$ before container $\pi(a)$, so that $\pi(a)$ is placed above $\pi(b)$. In order to have a target lay-out, we need that the stack is sorted in non-increasing order when viewed from the bottom. This immediately implies that $\pi(a) < \pi(b)$, so that there is no edge between nodes $a$ and $b$ in the permutation graph. As the two nodes were chosen arbitrarily, it holds for all pairs of nodes, so that $P_{j}$ is an independent set. As $j$ was also chosen arbitrarily, it holds for all $P_{j}$, proving the lemma.
\end{proof}

Hence, by Lemma \ref{lem_mes} we also have a solution for \exclusionscheduling. Assume that there is a solution for \exclusionscheduling. We construct the following solution of \lb moves for \prioritystacking: move the node containers in order $\pi(n),\ldots,\pi(1)$, where container $\pi(i)$ is moved to stack $j$ such that $i \in P_{j}$. The validity of these moves follows from the initial lay-out of stack $k+1$, and the fact that at most $m$ containers are added to each stack $1,\ldots,k$. Observe that Property \ref{prop_order} holds for this sequence. We proof the following lemma.

\begin{lemma}
\label{lem_stacking}
After all node containers are moved, the node containers in stacks $1,\ldots,k$ are sorted in non-increasing order when viewed from the bottom.
\end{lemma}

\begin{proof}
Take an arbitrary stack $j$. If this stack contains at most one node container, then the claim trivially holds. So, furthermore assume that $j$ contains at least two node containers, and let $\pi(a)$ and $\pi(b)$, with $a<b$, be two arbitrary node containers in this stack. By construction, we have that $\{a,b\} \subseteq P_{j}$, which implies that $\pi(a) < \pi(b)$, as otherwise there would be an edge between $a$ and $b$. By Property \ref{prop_order} we also know that $\pi(b)$ was moved to stack $j$ before $\pi(a)$, so that $\pi(a)$ is placed above $\pi(b)$. This immediately implies that $\pi(a)$ and $\pi(b)$ are sorted in the correct order. As the two node containers were chosen arbitrarily, it holds for all pairs of node containers, so that the claim holds for stack $j$. As $j$ was also chosen arbitrarily, the claim holds for all stacks $1,\ldots,k$.
\end{proof}

By Lemma \ref{lem_stacking}, we also have that these stacks are sorted in non-increasing order, so that we also have a target lay-out. The constructed sequence of moves is thus a solution for \prioritystacking, so that we have that \prioritystacking is NP-hard for arbitrary stack heights.
\end{proof}

\subsection{\prioritystacking is NP-hard for fixed height}
In the previous section we showed that \prioritystacking is NP-hard for arbitrary stack height. In this section, we consider the case where the maximum stack height is fixed. We construct an instance of \prioritystacking as follows. The maximum height $h$ is equal to $m$. For each $j \in [k]$ there is a stack $j$ which is initially empty. For each $i \in [n]$ there is a stack $s_{1,i}$. Stack $s_{1,1}$ initially consists of fill and node container ${-}n$ and $\pi(n)$, respectively. For the other stacks $s_{1,i}$ the initial lay-out consists of fill, node, and dummy container ${-}n{+}i{-}1$, $\pi(n{-}i{+}1)$, and ${-}n{+}i{-}2$, respectively. Finally, there are $z=km-n$ stacks, called $s_{2,i}$ for $i=1,\ldots,z$, with as initial lay-out fill and dummy container ${-}\infty$ and $0$, respectively. See Figure~\ref{fig:4_npc_priority_fixed} for an illustration, where for each container the priority is given.

\begin{figure}[h]
\centering
\includegraphics[scale=0.8]{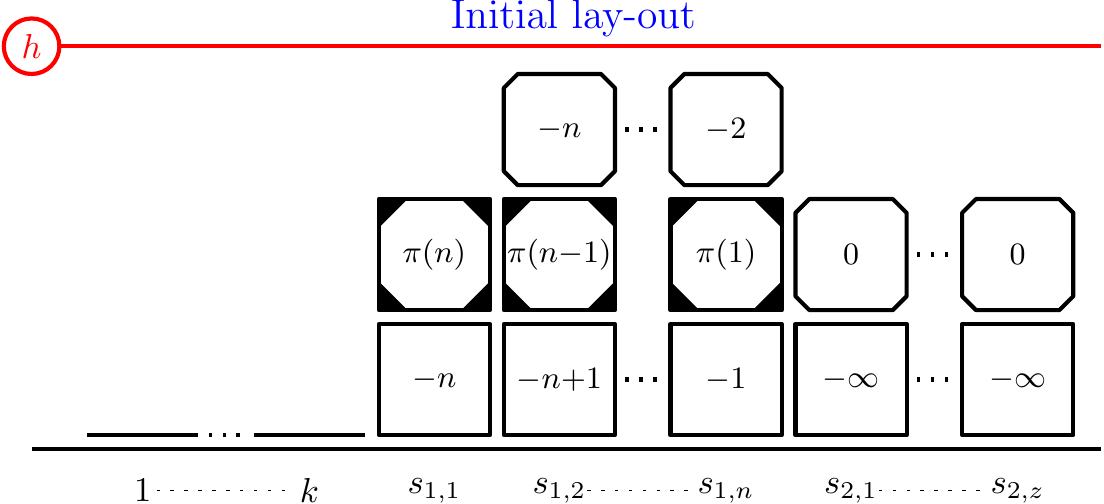}
\caption{\prioritystacking fixed height}
\label{fig:4_npc_priority_fixed}
\end{figure}

\begin{theorem}\label{thm:4_npc_priority_fixed}
For every fixed $h \geq 6$, \prioritystacking is NP-hard.
\end{theorem}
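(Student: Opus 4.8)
The plan is to reduce from \exclusionscheduling on permutation graphs, which is NP-hard for every fixed $m\ge 6$ by Jansen's theorem. Given a permutation $\pi:[n]\to[n]$ with permutation graph $G=([n],E)$ and integers $k,m$ with $m\ge 6$, I build exactly the instance described above, taking the stack height $h=m$ (if $km<n$ the \exclusionscheduling instance is a trivial ``no'' and we output a fixed ``no'' of the stacking problem, so assume $km\ge n$). The first step is to fix the lower bound: the $n$ node containers, the $n-1$ dummies ${-}n{+}i{-}2$ on the stacks $s_{1,i}$ with $i\ge 2$, and the $z=km-n$ dummies $0$ on the stacks $s_{2,i}$ are all wrongly placed, while every fill container sits at the bottom of its stack and the stacks $1,\dots,k$ are empty, so the bound \lb of Section~\ref{sec:bnp} equals $n+km-1$. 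The goal is then to prove that a solution using \lb moves exists if and only if the \exclusionscheduling instance is a ``yes''. The structural fact driving both directions: since \lb is a valid lower bound, in any solution of length \lb every wrongly placed container moves exactly once and no fill container ever moves.

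For one direction (a partition yields a short solution), from an \exclusionscheduling solution $P_1,\dots,P_k$ I process the stacks $s_{1,1},\dots,s_{1,n}$ in this order: at $s_{1,i}$ I first move the top dummy ${-}n{+}i{-}2$ onto $s_{1,i-1}$ --- which by then holds only its fill, of priority exactly ${-}n{+}i{-}2$, so the move is legal and leaves $s_{1,i-1}$ sorted --- and then move the newly exposed node $\pi(n{-}i{+}1)$ onto the stack $j$ with $n{-}i{+}1\in P_j$; afterwards I distribute the $z$ dummies $0$ over the stacks $1,\dots,k$ so that stack $j$ ends with exactly $m-|P_j|$ of them. This uses $n+(n-1)+z=\lb$ moves, respects $h=m$ because $|P_j|+(m-|P_j|)=m$, and ends in a target layout: each $s_{1,i}$ holds its fill and the equal-priority dummy ($s_{1,n}$ just its fill), each $s_{2,i}$ holds only ${-}\infty$, and on stack $j$ the node containers appear from the bottom in decreasing index order with the dummies $0$ above them, which is non-increasing in priority precisely because $P_j$ is independent (for $a<b$ in $P_j$ one has $\pi(a)<\pi(b)$).

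The converse is where the work lies. Given a solution of length \lb, observe that in the target layout a node container can lie only on another node container or on the floor of a stack (nothing of priority at least $1$ besides the nodes is available), and every node must leave its stack $s_{1,i}$; since only the stacks $1,\dots,k$ are ever empty, all $n$ nodes end up there. Similarly a dummy $0$ can lie only on a node, on another dummy $0$, or on a floor, so each dummy $0$ ends up resting --- possibly up a chain of dummies $0$ --- on a node in some stack $1,\dots,k$ or at the bottom of one. Hence the $n$ nodes and the $z$ dummies $0$ occupy all $km$ cells of the stacks $1,\dots,k$: these stacks are full, and since fill containers never move and nothing fits onto a fill ${-}\infty$, each dummy ${-}n{+}i{-}2$ must come to rest inside some stack $s_{1,f(i)}$. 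Its priority forces $f(i)\ge i-1$, and $f(i)\ne i$ since the dummy moves only once and so cannot return to $s_{1,i}$; thus $f(i)\in\{i-1\}\cup\{i+1,\dots,n\}$. Moreover the node of $s_{1,f(i)}$ must have left before this dummy is placed into $s_{1,f(i)}$, hence before the node of $s_{1,i}$ leaves. A downward induction on $i$ then forces $f(i)=i-1$ for every $i\ge 2$: $f(n)=n-1$ is immediate, and if $f(i')=i'-1$ for all $i'>i$ then the nodes of $s_{1,i},s_{1,i+1},\dots,s_{1,n}$ leave in that order, which rules out $f(i)\ge i+1$. Therefore the node containers leave the $s_{1,\cdot}$ stacks --- and are placed onto the stacks $1,\dots,k$ --- in the order $\pi(n),\pi(n{-}1),\dots,\pi(1)$. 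Putting $P_j:=\{\,i:\pi(i)\text{ ends in stack }j\,\}$ gives a partition of $[n]$ into at most $k$ classes of size at most $h=m$, and for $a<b$ in $P_j$ the container $\pi(a)$ is placed after --- hence lies above --- $\pi(b)$, so the target layout being non-increasing on stack $j$ gives $\pi(a)<\pi(b)$, \ie\ $P_j$ is independent; thus $(P_1,\dots,P_k)$ solves \exclusionscheduling.

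I expect the main obstacle to be this converse, and inside it the treatment of the dummy containers: one has to argue that a minimum-length solution saturates the stacks $1,\dots,k$ --- so the dummies ${-}n{+}i{-}2$ are trapped inside the $s_{1,\cdot}$ stacks --- and then run the ordering/induction argument pinning the removal order of the nodes; the rest is routine bookkeeping with the lower bound. The same instance with the ``sorted'' target replaced by a single explicit target layout ought to settle \configurationstacking as well, but rigidifying the target removes the scheduling slack used above, which is presumably why that proof is more involved.
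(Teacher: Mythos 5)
Your proof is correct and follows essentially the same route as the paper's: the identical lower-bound count, the same two structural properties (the node containers and the priority-$0$ dummies exactly saturate the $k$ empty stacks, trapping the negative dummies among the $s_{1,\cdot}$ stacks so that the nodes must leave in the order $\pi(n),\ldots,\pi(1)$), and the same constructive forward direction. The only difference is presentational: you pin down the removal order by downward induction on the destination $f(i)$ of each negative dummy, whereas the paper argues by contradiction from a maximal out-of-order index; your formulation is, if anything, the more carefully justified of the two.
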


\begin{proof}
Consider Figure~\ref{fig:4_npc_priority_fixed}, and observe that every node container is placed on top of a fill container with a higher priority. Furthermore, all dummy containers are either placed on a node container, or on a fill container with a higher priority. Hence, all node and dummy containers need to be moved at least once, so that the lower bound \lb on the number of moves is equal to $z+2n-1$.

Assume that there is a solution for \prioritystacking of \lb moves. Observe that the node containers can only be moved to stacks $1,\ldots,k$, as all other stacks contain a fill container with a higher priority. Since the maximum stack height is $m$, by definition at most $m$ node containers can be moved to each of these stacks. This already shows that Property \ref{prop_layout} holds.

Consider the $z$ dummy containers with priority $0$. As for the $n$ node containers, these dummy containers can only be moved to stacks $1,\ldots,k$. Together, these node and dummy containers sum up to $km$, which coincided with the total number of spots for these stacks. This implies that the dummy containers with priority ${-}2,\ldots,{-}n$ cannot be moved to one of these stacks. These containers can also not be moved to stacks $s_{2,1},\ldots,s_{2,z}$, since they contain a fill container with priority ${-}\infty$. Hence, for $i=1,\ldots,n-1$, dummy container ${-}i{-}1$, which is initially placed on stack $s_{1,n-i+1}$, can only be moved to stacks $s_{1,n-i}$ and $s_{1,n-i+2},\ldots,s_{1,n}$, \ie, it can be moved one stack to the ``left'' or any stack to the ``right''.

Assume that Property \ref{prop_order} does \emph{not} hold, \ie, the node containers are not moved in order the $\pi(n),\ldots,\pi(1)$. Let $\pi(i)$ denote the node container with highest index $i$ that is moved before $\pi(i{+}1)$. From this we know that node containers $\pi(1),\ldots,\pi(i{-}1)$ and $\pi(i{+}1)$ are moved after $\pi(i)$, which in turn is moved after dummy container ${-}i{-}1$, since it is placed below this dummy container. But then, we have that dummy container ${-}i{-}1$ can only be moved to a stack that still contains a node container. That would block this node container from moving, preventing a target lay-out from being reached. Hence, the assumption can not hold, so that Property \ref{prop_order} indeed holds.

Consider Lemma \ref{lem_mes} from the previous section. Since Properties \ref{prop_order} and \ref{prop_layout} both hold for the current sequence of moves, the lemma also holds for this instance. Hence, we also have a solution for \exclusionscheduling. 

Assume that a solution for \exclusionscheduling exists. Consider the following sequence of moves for \prioritystacking. Start with moving node container $\pi(n)$ to stack $j$ such that $n \in P_{j}$. Next, for $i=2,\ldots,n$ repeat the following two steps: move dummy container $i{-}n{-}2$ from stack $s_{1,i}$ to $s_{1,i-1}$, \ie, move it one stack to the ``left'', then move node container $\pi(n{-}i{+}1)$ to stack $j$ such that $n{-}i{+}1 \in P_{j}$. Finally, for $i \in [z]$, move the dummy container on stack $s_{2,i}$ to the ``leftmost'' stack among $1,\ldots,k$ that is not full. Observe that this sequence satisfies Property \ref{prop_order}.

Note that the order in which the containers are moved is from left to right and then from top to bottom, and all containers are moved to the left. Also no stack will ever contain more than $m$ containers, so that the sequence of moves is valid.

Consider the final lay-out. Stacks $s_{1,1},\ldots,s_{2,z}$ contain either one container, or two containers with the same priority, which are both target lay-outs. For stacks $1,\ldots,k$, we have that first the node containers are placed, and then the dummy containers $0$, which have a higher priority. Hence, if the node containers in a stack are sorted, the entire stack is sorted. As the sequence of moves satisfies Property \ref{prop_order}, Lemma \ref{lem_stacking} from the previous section still holds. This implies that the lay-out for stacks $1,\ldots,k$ is also a target lay-out.

Hence, all stacks exhibit a target lay-out, so that we also have a solution for \prioritystacking. This implies that for every fixed $h \geq 6$ \prioritystacking is NP-hard.
\end{proof}

\subsection{\configurationstacking is NP-hard for fixed height}
\label{sec:np_hard_configuration_fixed}
In the previous sections we showed that \prioritystacking is NP-hard, both in the case of arbitrary and fixed stack height. In this section we show that \configurationstacking is NP-hard if the maximum height is a constant with value at least $6$.

We construct an instance of \configurationstacking as follows. The maximum height $h$ is equal to $m$ and there are $4n+k+2$ stacks, which can be split into five groups. For each vertex $i$ in the permutation graph there are dummy containers $a_{i}$ to $h_{i}$, except for container $h_{n}$. Furthermore there are dummy containers $x$ and $y$, and $z=(n+1)(h-3)$ fill container. All fill containers have priority $\infty$, while all other containers have a unique priority.

Consider the initial lay-out. Stacks $1,\ldots,k$ are initially empty. Stacks $s_{1,1},\ldots,s_{1,n}$ consist of five containers, where the lay-out of stack $s_{1,i}$ is $\{g_{\pi(n{-}i{+}1}, d_{i}, b_{i}, \pi(n{-}i{+}1), a_{i}\}$. Stack $s_{2,1}$ contains $e_{1}$, and for $i=2,\ldots,n$, stack $s_{2,i}$ consists of containers $h_{i-1}$ and $e_{i}$. Stacks $s_{3,0},\ldots,s_{3,n}$ contain $h-3$ fill containers each. For $i=1,\ldots,n-1$, stack $s_{3,i}$ additionally contains dummy container $c_{i}$, while stack $s_{3,n}$ additionally consists of containers $x$ and $c_{n}$. Stack $s_{4,0}$ is initially empty, and stack $s_{4,i}$ consists of $f_{i}$ for $i=1,\ldots,n-1$. Stack $s_{4,n}$ contains dummy containers $y$ and $f_{n}$.

Consider the target lay-out. Stacks $1,\ldots,k$ and $s_{1,1},\ldots,s_{1,n}$ are empty. The lay-out for stacks $s_{2,1}$ and $s_{2,n}$ is $\{y,g_{1},1,h_{1}\}$ and $\{g_{n},n\}$, respectively. Stack $s_{2,i}$ consists of containers $g_{i}$, $i$, and $h_{i}$, for $i=2,\ldots,n-1$. Stacks $s_{3,0},\ldots,s_{3,n}$ contain $h-3$ fill containers each. For $i=1,\ldots,n$, stack $s_{3,i-1}$ additionally contains dummy containers $a_{i}$, $b_{i}$, and $c_{i}$. Stack $s_{4,0}$ consists of $4$ containers, namely $x$, $d_{1}$, $e_{1}$, and $f_{1}$. For $i=2,\ldots,n$, the lay-out of stack $s_{4,i-1}$ is $\{d_{i}, e_{i}, f_{i}\}$. Finally, stack $s_{4,n}$ is empty in the target lay-out.

See Figure~\ref{fig:2_npc_configuration_fixed} for an illustration of the initial and target lay-out, where for each container the name is given and the fill containers are depicted by black boxes.

\begin{figure}[h]
\centering
\includegraphics[width=1\textwidth]{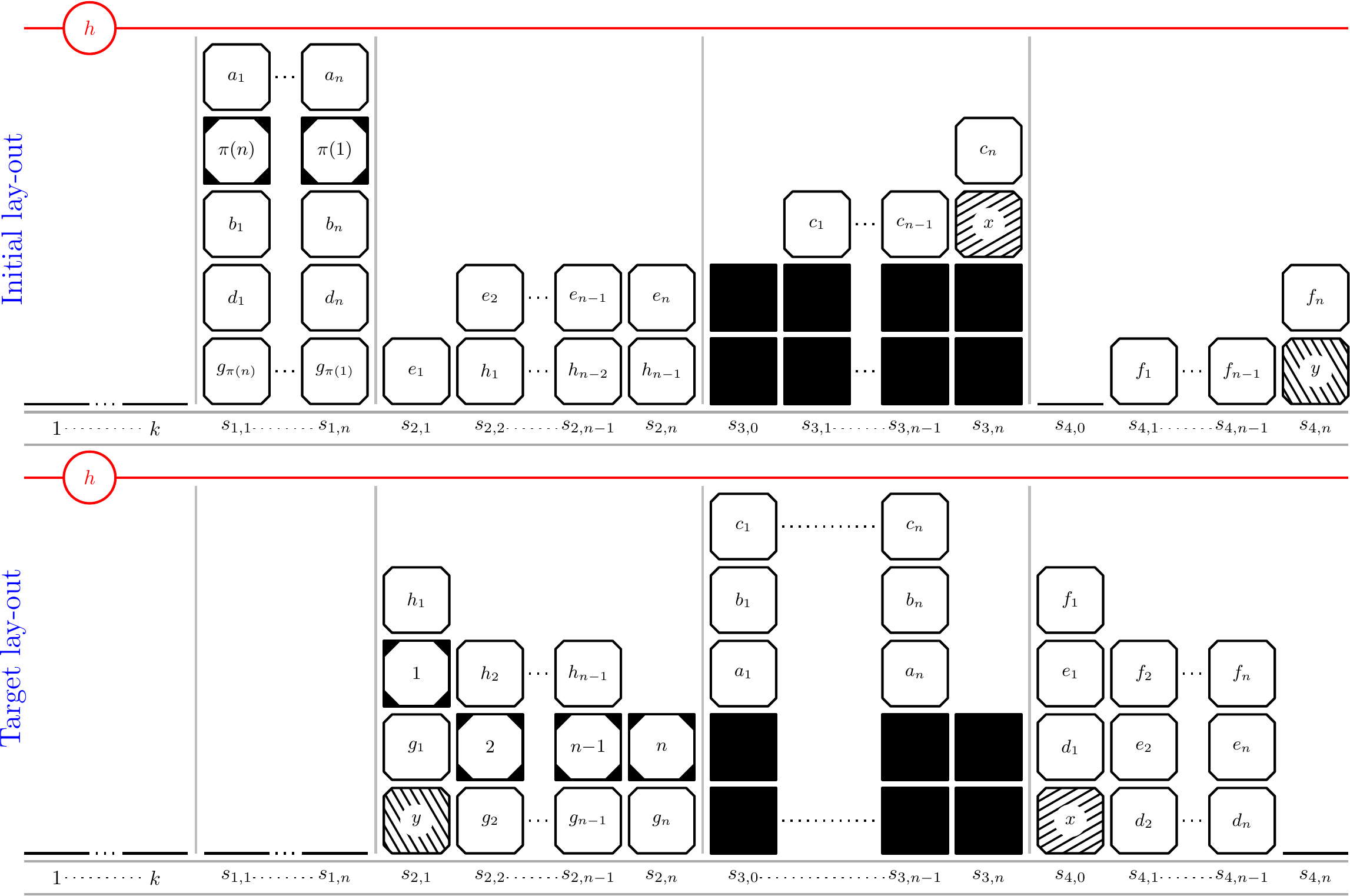}
\caption{\configurationstacking fixed height}
\label{fig:2_npc_configuration_fixed}
\end{figure}

\begin{theorem}\label{thm:2_npc_configuration_fixed}
For every fixed $H \geq 6$, \configurationstacking is NP-hard.
\end{theorem}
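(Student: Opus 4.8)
\emph{Proof plan.} As in Theorems~\ref{thm:3_npc_priority_arbitrary} and~\ref{thm:4_npc_priority_fixed}, I would reduce from \exclusionscheduling on permutation graphs, which by \cite{Jansen03} is NP-hard for every fixed $m\geq 6$; the reduction uses the instance of \configurationstacking depicted in Figure~\ref{fig:2_npc_configuration_fixed} with $h=m$. First I would fix the lower bound \lb. Since no non-fill container sits in its target position in the initial lay-out, each of the $n$ node containers and each of the dummy containers $a_i,\ldots,g_i$ for $i\in[n]$, the containers $h_i$ for $i\in[n-1]$, and $x,y$ must move at least once; moreover, in the target each node container $v$ lies directly on top of $g_v$ on stack $s_{2,v}$, while initially $g_v$ lies \emph{below} $v$ inside some stack $s_{1,\cdot}$, so $g_v$ can reach $s_{2,v}$ only after $v$ has been moved off $s_{1,\cdot}$ at least once, forcing each node container to move at least twice. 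Hence \lb equals the number of non-fill containers plus $n$, and I would then show that a solution of exactly \lb moves exists if and only if \exclusionscheduling has a solution.

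In the forward direction, in a \lb-move solution every dummy moves exactly once and every node container exactly twice. The heart of the argument is that the two chains of dummy containers are forced to run in one fixed order. On the one side, $a_i,b_i,c_i$ each reach stack $s_{3,i-1}$ in their single move, and $a_i$ can land there only when the $h-3$ fill containers are its sole occupants, which requires $c_{i-1}$ to have already departed $s_{3,i-1}$ and hence $a_{i-1},b_{i-1}$ to already rest on $s_{3,i-2}$, and so on; so this chain can only be executed as $a_1,b_1,c_1,a_2,b_2,c_2,\ldots$, which is what eventually frees $x$ from $s_{3,n}$. Symmetrically the containers $d_i,e_i,f_i$ (together with $x$ and $y$) can only be placed on the stacks $s_{4,\cdot}$ left to right, and the stacks $s_{2,\cdot}$ can only be emptied and reassembled in increasing index order, since each of them needs its top container $h_\cdot$, which sits at the bottom of the next stack. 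Consequently the node containers leave the stacks $s_{1,\cdot}$ in the order $\pi(n),\ldots,\pi(1)$ and are placed onto the stacks $s_{2,\cdot}$ in the order $1,\ldots,n$ (the analogue of Property~\ref{prop_order}); and, since one move per dummy leaves no free space anywhere except on the initially empty stacks $1,\ldots,k$ (any other resting place for a node container deadlocks against one of the two chains), each node container rests, between its two moves, on one of the stacks $1,\ldots,k$, at most $m=h$ per stack (the analogue of Property~\ref{prop_layout}). Given these two facts, the LIFO order on the stacks $1,\ldots,k$ together with the mismatch between the two orders above forces, by the reasoning of Lemma~\ref{lem_mes}, the set of vertices whose node container rests on stack $q$ to be an independent set of size at most $m$, which is the desired \exclusionscheduling solution.

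In the reverse direction, given a partition $P_1,\ldots,P_k$ I would write out the schedule: going $i=1,\ldots,n$, peel stack $s_{1,i}$ by moving $a_i,b_i$ onto $s_{3,i-1}$ and the node container $\pi(n{-}i{+}1)$ onto the stack $q$ with $\pi(n{-}i{+}1)\in P_q$; once $x$ has been freed and moved to $s_{4,0}$, also move $d_i$ onto $s_{4,i-1}$, stack $e_i$ and $f_i$ above it, and move $g_{\pi(n{-}i{+}1)}$ onto the by-then emptied stack $s_{2,\pi(n{-}i{+}1)}$; finally assemble the stacks $s_{2,v}$ for $v=1,\ldots,n$, drawing node container $v$ off the top of its resting stack onto $g_v$ and capping with $h_v$ (and with $y$ underneath when $v=1$). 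I then check that no stack ever exceeds height $h$ --- a resting stack never holds more than $|P_q|\leq m$ node containers, the stacks $s_{2,\cdot}$ and $s_{4,\cdot}$ never exceed four containers, and the stacks $s_{1,\cdot}$ only shrink --- that the resulting lay-out is exactly the target, and that the number of moves is \lb; independence of $P_q$ is used only to guarantee that node container $v$ is on top of its resting stack at the moment it is needed, mirroring Lemma~\ref{lem_stacking}. Taking $m=H$ for any fixed $H\geq 6$ then yields NP-hardness of \configurationstacking for that height (and $H\geq 6$ also accommodates the five-element stacks $s_{1,\cdot}$).

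The main obstacle is the rigidity claim behind the forward direction: that a \lb-move solution has no slack, so that both dummy chains are essentially deterministic and the only residual freedom is the choice of resting stack for each node container. Establishing it calls for a careful case analysis over all container types --- observing that the very first move must be $a_1\to s_{3,0}$, propagating the forced order along both chains, and ruling out every alternative parking place for a node container by exhibiting either a circular dependency with the chains or an overshoot of the move budget. This is precisely where the configuration variant is harder than \prioritystacking: the rigidly specified target lay-out is what pins the chains down and leaves no room for rearrangement.
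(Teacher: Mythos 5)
Your overall plan is the same as the paper's: reduce from \exclusionscheduling on permutation graphs via the Figure~\ref{fig:2_npc_configuration_fixed} construction, establish $\lb = 10n+1$ (two moves per node container plus one per dummy), argue in the forward direction that a $\lb$-move solution is essentially rigid so that the analogues of Properties~\ref{prop_order} and~\ref{prop_layout} and Lemma~\ref{lem_mes} kick in, and in the reverse direction exhibit a $\lb$-move schedule. The lower-bound count and the sketch of the forward direction match the paper's Observations~\ref{obs_phase_1}--\ref{obs_phase_3} and Properties~\ref{prop_order_configuration}--\ref{prop_layout_configuration}, and you correctly flag the rigidity claim as the hard part.

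The reverse-direction schedule, however, contains a genuine error. You propose moving $g_{\pi(n-i+1)}$ during iteration $i$ of the peel loop (i.e., in the order $g_{\pi(n)},g_{\pi(n-1)},\ldots,g_{\pi(1)}$) onto a ``by-then emptied'' stack $s_{2,\pi(n-i+1)}$. But $s_{2,j}$ (for $j\geq 2$) initially holds $h_{j-1}$ at the bottom and $e_j$ on top, and $h_{j-1}$ must move to the \emph{top} of $s_{2,j-1}$ in the target, so it can depart only after $g_{j-1}$ and node container $j-1$ are already in place on $s_{2,j-1}$. Likewise $g_1$ can be placed only after $y$, which in turn can move only after $f_n$. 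Chasing this chain forces the $g$'s to be placed strictly in the order $g_1,g_2,\ldots,g_n$ and moreover interleaved with the second moves of the node containers and the $h_i$'s, exactly as in the paper's third phase ($g_1,1,h_1,g_2,2,h_2,\ldots$); they cannot be pulled forward into the peel or $d/e/f$ loops, nor moved in $\pi$-dependent order. Your schedule also never moves $c_i$, yet $c_i$ sits on $s_{3,i}$ and must vacate it before $a_{i+1}$ can land there, and $c_n$ must move to free $x$; so the $c_i$'s belong in your first loop. Once you replace the reverse-direction construction with the paper's three-phase schedule (first $a_i,\pi(n-i+1),b_i,c_i$ for $i=1,\ldots,n$, then $x$; then $d_i,e_i,f_i$, then $y$; then $g_i,i,h_i$ for $i=1,\ldots,n$), the rest of your argument --- including the use of Lemma~\ref{lem_stacking} to guarantee the node containers are retrievable in order $1,\ldots,n$ --- goes through.
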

\begin{proof}
Take an arbitrary node container $i$, and observe that it is placed on top of dummy container $g_{i}$ in both the initial and target lay-out. This implies that $i$ has to be moved at least twice, once before $g_{i}$ can be moved, and once after $g_{i}$ is moved. As $i$ was chosen arbitrarily, this holds for all node containers. Furthermore, all dummy containers need to be moved at least once, so that the lower bound \lb on the number of moves for this instance is $10n+1$.

Assume that there is a solution for \configurationstacking of \lb moves. This implies that the node containers are moved twice and the dummy containers once. We claim that there is only a single order of moves that transforms the initial lay-out into the target lay-out in \lb moves. To see why this is true, let $t_{\alpha}$ denote the time point at which dummy container $\alpha$ is moved, and let $t_{i}^{(1)}$ and $t_{i}^{(2)}$ denote the first and second time point at which node container $i$ is moved. Consider the following observations.

\begin{observation}\label{obs_phase_1}
$t_{a_{i}} < t_{\pi(n-i+1)}^{(1)} < t_{b_{i}} < t_{c_{i}}$, $\forall i \in [n]$, and $t_{c_{i}} < t_{a_{i+1}}$, $\forall i \in [n{-}1]$.
\end{observation}

\begin{proof}
The first two inequalities follow from the fact $a_{i}$ is placed on top of $\pi(n{-}i{+}1)$, which in turn is placed on top of $b_{i}$, in the initial lay-out of stack $s_{1,i}$. The third inequality follows from the target lay-out of stack $s_{3,i-1}$, where $c_{i}$ is placed on top of $b_{i}$. The last inequality follows from the fact that $a_{i+1}$ has to be moved to the initial position of $c_{i}$. This is only possible if $c_{i}$ is moved before $a_{i+1}$.
\end{proof}

\begin{observation}\label{obs_transition_1_2}
$t_{c_{n}} < t_{x} < t_{d_{1}}$.
\end{observation}

\begin{proof}
The fist inequality follows from the initial lay-out of stack $s_{3,n}$, where $c_{n}$ is placed on top of $x$. The second inequality follows from the target lay-out of stack $s_{4,0}$, where $x$ is placed below $d_{1}$.
\end{proof}

\begin{observation}\label{obs_phase_2}
$t_{d_{i}} < t_{e_{i}} < t_{f_{i}}$, $\forall i \in [n]$, and $t_{f_{i}} < t_{d_{i+1}}$, $\forall i \in [n{-}1]$.
\end{observation}

\begin{proof}
The first two inequalities follow from the target lay-out of stack $s_{4,i-1}$. The last inequality follows from the fact that $d_{i+1}$ is placed at the start position of $f_{i}$, which thus has to be moved before $d_{i+1}$.
\end{proof}

\begin{observation}\label{obs_transition_2_3}
$t_{f_{n}} < t_{y} < t_{g_{1}}$.
\end{observation}

\begin{proof}
These inequalities follow from the initial lay-out of stack $s_{4,n}$ and the target lay-out of stack $s_{2,1}$, respectively.
\end{proof}

\begin{observation}\label{obs_phase_3}
$t_{g_{i}} < t_{i}^{(2)}$, $\forall i \in [n]$, and $t_{i}^{(2)} < t_{h_{i}} < t_{g_{i+1}}$, $\forall i \in [n{-}1]$.
\end{observation}

\begin{proof}
The first two inequalities follow from the target lay-out of stack $s_{2,i}$. The last inequality follows from the fact that $g_{i+1}$ takes over the position of $h_{i}$. This is only possible if $h_{i}$ moves before $g_{i+1}$.
\end{proof}

Combining these observations, we can split the unique sequence of moves into three phases.

In the first phase, the following set of moves is repeated for $i \in [n]$: move $a_{i}$ to its final location in stack $s_{3,i-1}$, move $\pi(n{-}i{+}1)$ to an intermediate stack, move $b_{i}$ on top of $a_{i}$, and move $c_{i}$ on top of $b_{i}$. This sequence follows immediate from Observation \ref{obs_phase_1}. The first phase is concluded by moving container $x$ from stack $s_{3,n}$ to stack $s_{4,0}$, which follows from Observation \ref{obs_transition_1_2}.

In the second phase, we repeat the following set of moves for $i \in [n]$: move $d_{i}$ to stack $s_{4,i-1}$, move $e_{i}$ on top of $d_{i}$, and move $f_{i}$ on top of $e_{i}$. This phase follows from Observation \ref{obs_phase_2}. The final move of the second phase is to move container $y$ from stack $s_{4,n}$ to stack $s_{2,1}$. This move follows from Observation \ref{obs_transition_2_3}.

In the third phase, repeat these set of moves for $i \in [n{-}1]$: move $g_{i}$ to its final position, move $i$ on top of $g_{i}$, and move $h_{i}$ on top of $i$. Finally, move $g_{n}$ to stack $s_{2,n}$, and place $n$ on top of it. This sequence follows from Observation \ref{obs_phase_3}.

For this sequence, we will show that the following two properties, that resemble Properties \ref{prop_order} and \ref{prop_layout}, hold.

\begin{property}
\label{prop_order_configuration}
The node containers are first moved in the order $\pi(n),\ldots,\pi(1)$. For their second move, the order is $1,\ldots,n$, where the second move of $1$ occurs after the first move of $\pi(1)$
\end{property}

\begin{property}
\label{prop_layout_configuration}
All node containers are on stacks $1,\ldots,k$ after they are moved for the first time. Each of these stacks contains at most $m$ node containers.
\end{property}

Observe that Property \ref{prop_order_configuration} follows immediately from the unique order of moves. To see that Property \ref{prop_layout_configuration} also holds is not so trivial. Since the stack height is $m$, clearly at most $m$ node containers are on each of the stacks $1,\ldots,k$. Hence, we need to show that all node containers are placed on stacks $1,\ldots,k$ after they are moved once. Thereto, take an arbitrary node container $\pi(i)$. Observe that stacks that are involved in a move between the first and second move of $\pi(i)$ can not be used as intermediate stack. This immediately excludes stacks $s_{1,1}, \ldots, s_{1,n}$, $s_{2,1}, \ldots, s_{2,n}$ and $s_{4,0}, \ldots, s_{4,n}$ for use as an intermediate stack. Consider the lay-out at the moment that $\pi(i)$ is the next container to be moved for the first time. In previous moves $c_{1}, \ldots, c_{n-i}$ were moved to their final position, so that currently stacks $s_{3,0}, \ldots, s_{3,n-i-1}$ are full. In future moves, but before $\pi(i)$ is moved for a second time, $c_{n-i+1}, \ldots, c_{n}$ need to be moved to their final position. These moves involve stacks $s_{3,n-i}, \ldots, s_{3,n}$, which implies that stacks $s_{3,0}, \ldots, s_{3,n}$ can also not be used as intermediate stacks. As $\pi(i)$ was chosen arbitrarily, this holds for all node containers. This leaves stacks $1,\ldots,k$ as only candidates for intermediate stacks, so that Property \ref{prop_layout_configuration} also holds. 

Using Properties \ref{prop_order_configuration} and \ref{prop_layout_configuration}, we state the following lemma, that resembles Lemma \ref{lem_mes}.

\begin{lemma}
\label{lem_mes_configuration}
After all node containers are moved once, the lay-out of stacks $1,\ldots,k$ constitutes a solution for \exclusionscheduling
\end{lemma}

\begin{proof}
The proof is identical to the one for Lemma \ref{lem_mes}. The stacks still need to be sorted in non-increasing order when viewed from the bottom. This time \emph{not} to obtain a target lay-out, but to make sure that the order for the second move of the node containers, namely $1,\ldots,n$ is possible.
\end{proof}

Hence, by Lemma \ref{lem_mes_configuration} we have a solution for \exclusionscheduling. Assume we have a solution for \exclusionscheduling. As solution for \configurationstacking we take the same order of moves as before. For the dummy containers, which are only moved once, the exact move is thus known, as we know the origin and destination stack from the initial and target lay-out, respectively. For the node containers, which are moved twice, we have to specify an intermediate stack. As intermediate stack for $\pi(i)$ we take stack $j$ such that $i \in P_{j}$. If all moves are valid, we obtain the target lay-out by definition.

As all node containers are moved to stacks $1,\ldots,k$, they do not ``interfere'' with the moves of the dummy containers, \ie, they are not moved on top of a dummy container. Hence, all moves involving dummy containers are valid. Hence, we only need to show that the moves involving the node containers are valid. Clearly, the first move of each node container is valid, since at most $m$ node containers are placed on each stack. The second move of the node containers, which is in the order $1,\ldots,n$, is possible if stacks $1,\ldots,k$ are sorted in non-increasing order. Therefore, we state the following lemma, that resembles Lemma \ref{lem_stacking}.

\begin{lemma}
\label{lem_stacking_configuration}
After all node containers are moved once, the node containers in stacks $1,\ldots,k$ are sorted in non-increasing order when viewed from the bottom.
\end{lemma}

\begin{proof}
The proof is identical to the one for Lemma \ref{lem_stacking}.
\end{proof}

Hence, all moves are valid and we obtain the target lay-out, so that we have a solution for \configurationstacking. Concluding, we have that for every fixed $h \geq 6$ \configurationstacking is NP-hard.
\end{proof}

\section{Tables}\label{Tables}

This section contains a more elaborate overview of the results described in Section \ref{sec:expRes}. All four tables contain the same rows. The results are split into groups for the different values for Priorities (P), Stacks (S), Height (H), and Fill (F), respectively. The final row represents the results over all instances.

Table \ref{results_quality} states for each group the number of instances, the mis-overlay index in percentage, the number of instances solved, and the integrality gap for the solved instances.

Table \ref{results_procedure} contains for each group the number of solved and killed trees, and the number of moves. These results are split over the solved and unsolved instances.

Table \ref{results_time} contains time related results. For each group, the total running time, the time spend on solving the LP relaxation, the time spend on generating sequences, and the time spend on clearing the cplex models is given.

Table \ref{results_memory} contain the memory related results. With respect to the nodes, the average number solved is given, as well as the average and maximum number in memory. With respect to the sequences, the average and maximum number generated is given, as well as the maximum number in memory and in the LP model. 

\begin{table}[h]
\centering
\begin{tabular}{lc|R{1.65cm}R{1.65cm}R{1.65cm}R{1.65cm}}
\multicolumn{2}{l|}{Group} & instances & mis-overlay & solved & int. gap \\
\hline
\multirow{3}{*}{P} & 2 & 320 & 32.3 & 320 & 1.05 \\
 & 3 & 320 & 38.4 & 320 & 1.13 \\
 & 6 & 320 & 44.1 & 305 & 1.21 \\
\hline
\multirow{4}{*}{S} & 3 & 240 & 39.4 & 239 & 1.34 \\
 & 5 & 240 & 39.5 & 236 & 1.09 \\
 & 7 & 240 & 35.4 & 236 & 1.05 \\
 & 9 & 240 & 38.8 & 234 & 1.03 \\
\hline
\multirow{2}{*}{H} & 4 & 480 & 31.4 & 480 & 1.09 \\
 & 6 & 480 & 45.1 & 465 & 1.16 \\
\hline
\multirow{2}{*}{F} & 50 & 480 & 35.5 & 480 & 1.08 \\
 & 70 & 480 & 41.1 & 465 & 1.17 \\
\hline
\multicolumn{2}{l|}{Total}& 960 & 38.3 & 945 & 1.13 \\
\end{tabular}
\caption{Results (1)}
\label{results_quality}
\end{table}

\begin{table}[h]
\centering
\begin{tabular}{lc|R{1.3cm}R{1.3cm}R{1.3cm}|R{1.3cm}R{1.3cm}R{1.3cm}}
&& \multicolumn{3}{c|}{solved} & \multicolumn{3}{c}{unsolved} \\
\multicolumn{2}{l|}{Group} & T solved & T killed & moves & T solved & T killed & moves \\
\hline
\multirow{3}{*}{P} & 2 & 1.93 & 0.67 & 0.39 & - & - & - \\
 & 3 & 2.38 & 0.78 & 0.49 & - & - & - \\
 & 6 & 2.95 & 0.96 & 0.59 & 5.07 & 2.13 & 0.83 \\
\hline
\multirow{4}{*}{S} & 3 & 3.48 & 1.00 & 0.67 & 9.00 & 2.00 & 1.58 \\
 & 5 & 2.27 & 0.72 & 0.47 & 6.25 & 2.25 & 0.93 \\
 & 7 & 1.94 & 0.67 & 0.39 & 5.00 & 2.50 & 0.78 \\
 & 9 & 1.95 & 0.81 & 0.42 & 3.67 & 1.83 & 0.68 \\
\hline
\multirow{2}{*}{H} & 4 & 1.86 & 0.50 & 0.39 & - & - & - \\
 & 6 & 2.98 & 1.12 & 0.58 & 5.07 & 2.13 & 0.83 \\
\hline
\multirow{2}{*}{F} & 50 & 1.76 & 0.42 & 0.43 & - & - & - \\
 & 70 & 3.08 & 1.19 & 0.55 & 5.07 & 2.13 & 0.83 \\
\hline
\multicolumn{2}{l|}{Total} & 2.41 & 0.80 & 0.49 & 5.07 & 2.13 & 0.83 \\
\end{tabular}
\caption{Results (2)}
\label{results_procedure}
\end{table}

\begin{table}[h]
\centering
\begin{tabular}{lc|R{1.6cm}R{1.6cm}R{1.6cm}R{1.6cm}}
\multicolumn{2}{l|}{Group} & total time & lp time & gen time & clear time \\
\hline
\multirow{3}{*}{P} & 2 & 0.72 & 0.24 & 0.31 & 0.05 \\
 & 3 & 5.04 & 2.07 & 2.41 & 0.39 \\
 & 6 & 275.18 & 137.69 & 100.80 & 34.59 \\
\hline
\multirow{4}{*}{S} & 3 & 45.03 & 18.55 & 24.14 & 2.08 \\
 & 5 & 79.26 & 44.37 & 24.43 & 10.17 \\
 & 7 & 98.88 & 52.15 & 31.36 & 14.60 \\
 & 9 & 151.41 & 71.59 & 58.10 & 19.85 \\
\hline
\multirow{2}{*}{H} & 4 & 0.59 & 0.20 & 0.27 & 0.04 \\
 & 6 & 186.70 & 93.12 & 68.74 & 23.31 \\
\hline
\multirow{2}{*}{F} & 50 & 2.38 & 0.77 & 1.24 & 0.20 \\
 & 70 & 184.92 & 92.56 & 67.78 & 23.15 \\
\hline
\multicolumn{2}{l|}{Total} & 93.65 & 46.66 & 34.51 & 11.67 \\
\end{tabular}
\caption{Results (3)}
\label{results_time}
\end{table}

\begin{table}[h]
\centering
\begin{tabular}{lc|R{1.0cm}R{0.8cm}R{0.8cm}|R{1.24cm}R{1.43cm}R{1.24cm}R{1.24cm}}
&& \multicolumn{3}{c|}{nodes} & \multicolumn{4}{c}{sequences} \\
\multicolumn{2}{l|}{Group} & solve & avg & max & avg & max & mmry & model \\
\hline
\multirow{3}{*}{P} & 2 & 35.3 & 3.4 & 36 & 481 & 38,335 & 4,979 & 2,677 \\
 & 3 & 72.7 & 4.6 & 50 & 2,580 & 148,267 & 26,911 & 16,354 \\
 & 6 & 779.1 & 6.3 & 56 & 40,128 & 1,491,758 & 59,729 & 39,303 \\
\hline
\multirow{4}{*}{S} & 3 & 172.5 & 2.1 & 14 & 22,386 & 1,491,758 & 59,729 & 39,303 \\
 & 5 & 181.6 & 3.9 & 30 & 11,218 & 592,369 & 30,759 & 23,495 \\
 & 7 & 330.5 & 5.4 & 39 & 11,107 & 474,298 & 19,301 & 14,009 \\
 & 9 & 498.3 & 7.7 & 56 & 12,875 & 591,631 & 22,237 & 16,048 \\
\hline
\multirow{2}{*}{H} & 4 & 13.6 & 2.8 & 28 & 543 & 68,279 & 20,704 & 14,195 \\
 & 6 & 577.8 & 6.8 & 56 & 28,250 & 1,491,758 & 59,729 & 39,303 \\
\hline
\multirow{2}{*}{F} & 50 & 57.2 & 3.3 & 31 & 1,152 & 66,710 & 19,536 & 13,844 \\
 & 70 & 534.2 & 6.3 & 56 & 27,641 & 1,491,758 & 59,729 & 39,303 \\
\hline
\multicolumn{2}{l|}{Total} & 295.7 & 4.8 & 56 & 14,396 & 1,491,758 & 59,729 & 39,303 \\
\end{tabular}
\caption{Results (4)}
\label{results_memory}
\end{table}

\end{document}